\documentclass[10pt,twoside]{amsart}
\usepackage{amsfonts}
\usepackage{amssymb}
\usepackage{graphicx,color}
\usepackage{a4wide}
\usepackage{amsmath}
\usepackage{amsthm}

\date{\today}

\newtheorem{theorem}{Theorem}[section]
\newtheorem{lemma}[theorem]{Lemma}

\newtheorem{Theorem}[theorem]{Theorem}

\theoremstyle{definition}
\newtheorem{definition}[theorem]{Definition}
\newtheorem{example}[theorem]{Example}
\newtheorem{remark}[theorem]{Remark}

\newtheorem{construction}[theorem]{Construction}

\newcommand{\F}{{\mathbb F}}

\newcommand{\Tr}{{\rm Tr}}

\newcommand{\Image}{{\rm Im}}

\newcommand{\be}{\begin{eqnarray}}
\newcommand{\ee}{\end{eqnarray}}
\newcommand{\nn}{{\nonumber}}
\newcommand{\dd}{\displaystyle}
\newcommand{\ra}{\rightarrow}
\newcommand{\Ra}{\Rightarrow}

\begin{document}

\title[Bounds on the covering radius of orthogonal arrays]{\bf New bounds on the covering radius of orthogonal arrays of even strength}

\author[P. Boyvalenkov]{Peter Boyvalenkov} 
\address{Institute of Mathematics and Informatics, Bulgarian Academy of Sciences \\
8 G Bonchev Str., 1113  Sofia, Bulgaria}
\email{peter@math.bas.bg}

\author[F. \"{O}zbudak]{Ferruh \"{O}zbudak}
\address{ Faculty of Engineering and Natural Sciences, Sabancı University, İstanbul, Türkiye }
\email{ferruh.ozbudak@sabanciuniv.edu}

\author[M. Stoyanova]{Maya Stoyanova} 
\address{ Faculty of Mathematics and Informatics, Sofia University ``St. Kliment Ohridski"\\
5 James Bourchier Blvd., 1164 Sofia, Bulgaria}
\email{stoyanova@fmi.uni-sofia.bg}

\begin{abstract} \noindent
We obtain new linear programming (LP) and constructive bounds for the covering radius of binary orthogonal arrays of strength $2k$. Our LP bounds develop in two alternative scenarios. First, if a point $y \in F_2^n$, where the covering radius of some orthogonal array $C \subset F_2^n$ of strength $2k$ is realized, is such that the farthest point of $C$ to $y$ is not antipodal to $y$ we obtain a bound which is better
than the Tiet{\"a}v{\"a}inen (or Fazekas-Levenshtein) bound for non-tight arrays (i.e., the cardinality strictly exceeds the Rao lower bound). Second, if all points where the covering radius is realized are such that their antipodes are in $C$, we obtain a bound which depends on the cardinality of $C$ and is again better whenever the orthogonal array is not tight. We further describe three infinite families of binary orthogonal arrays related to the duals of BCH, Melas, and Zetterberg codes. For these families, we derive lower bounds on the covering radius by applying techniques from algebraic curves over finite fields, while the improved linear programming methods developed in this paper provide upper bounds, leading in some cases to fairly close estimates.
\end{abstract}

\maketitle

\section{Introduction}

Orthogonal arrays have been studied for a wide range of practical applications (experiments, trials, and others) in industry, medicine, agriculture, and others, but also for applications (software
testing, big data, data protection, and others) in computer science (cf. the book \cite{HSS} and references therein). Applications of OAs in cryptography are also considered, for instance, for constructions of secret sharing schemes (see, e.g., \cite{DM93}) and universal hash functions (see, e.g., \cite{CW79,Sti94,GS08}). 

Let $F_2=\{0,1\}$ be an alphabet of two letters and $F_2^n$ be the Hamming space over $F_2$ with the Hamming distance 
$d(x,y)$ between $x,y \in F_2^n$. Let $C \subset F_2^n$ be a code. We again denote by $C$ any $M \times n$ matrix formed by the codewords of $C$ as rows. 

\begin{definition} \label{def-oa} Let $\tau$ and $\lambda$ be positive integers. A code $C \subset F_2^n$ is called a binary orthogonal array (OA) of strength $\tau$ and index $\lambda$ (denoted by OA$_{\lambda}(M,n,\tau)$; $\lambda$ will be omitted since it is implicit from $M$ and $\tau$), if the matrix $C$ has the following property: every $M \times \tau$ submatrix of it contains all ordered $\tau$-tuples
of $F_2^{\tau}$, each exactly $\lambda=M/q^{\tau}$ times as rows.
\end{definition}

There are several equivalent definition of OAs, some of them algebraic. Linear programming techniques use the definition given in Theorem \ref{th:pi-f} below. We also use the important relation $\tau=d^\perp-1$, where $d^\perp$ is the minimum distance of the dual code $C^\perp$ \cite{Del-dis,Del73} (here $F_2$ is the binary field and the duality is Euclidean).

For given $n$, $M$, and $\tau$, we are interested in bounds for the covering radius of binary OA$(M,n,\tau)$. 

\begin{definition} \label{def-cr} 
The covering radius of a code $C \subset F_2^n$ is the quantity
\[ R(C):= \max_{y \in F_2^n} \ \min_{x \in C} d(x,y). \]
In other words, $R(C)$ is the smallest positive integer such that the balls of that radius and centers the codewords of $C$ cover the whole space $F_2^n$.
\end{definition}

The setting in this paper is better suited for a change of the variable $d=d(x,y)$ via $t(d)=1-2d/n$. We will also use the notation $\langle x,y \rangle =1-2d(x,y)/n$. Thus, we consider the quantity
\[ \rho(C):= \min_{y \in F_2^n} \ \max_{x \in C} \left(1-\frac{2d(x,y)}{n}\right)=1-\frac{2R(C)}{n} \]
and will call it covering radius as well.  

Linear programming bounds for covering radius of designs in polynomials metric spaces 
were obtained by Fazekas and Levenshtein \cite[Theorem 2]{FL}. In the binary Hamming spaces $F_2^n$ these bounds coincide with the bounds obtained in 1990-1991 by Tiet{\"a}v{\"a}inen \cite{Tie90,Tie91} and can be formulated as follows. 
If $C \subset F_2^n$ is an OA of strength $\tau=2k-1+e$, $e \in 
\{0,1\}$, just indicates the parity of $\tau$, then
\begin{equation}
\label{FL_bound}
\rho(C) \geq t_k^{0,e},
\end{equation}
where $t_k^{0,e}$ is the largest zero of a Krawtchouk-like polynomial $Q_k^{(0,e)}(t)$ (to be explained below). In terms of distances, \eqref{FL_bound} is written as 
\begin{equation}
\label{Tie_bound}
R(C) \leq d_k^{(n-e)}=\frac{n(1-t_k^{0,e})}{2}
\end{equation}
(this is known as  Tiet{\"a}v{\"a}inen bound), where $d_k^{(n-e)}$ is the smallest zero of the binary Krawtchouk polynomial $K_k^{(n-e)}(z)$ (to be explained below). The values of $t_k^{0,e}$ and $d_k^{(n-e)}$ are shown in a table below (part of Table I on page 282 in \cite{FL}). 
The bounds \eqref{FL_bound} and \eqref{Tie_bound} were investigated in various asymptotic processes \cite{AHLL99,FL,LSS98,LT96,sol95,SS93,Tie90,Tie91}. 

In this paper we use linear programming techniques to obtain a new lower bound for the covering radius of OAs of fixed length $n$ and strength $2k$ which is better than \eqref{FL_bound}. In doing so, we introduce and investigate a class of orthogonal polynomials which are positive definite up to certain degree (that degree is $k-1$ in the setting for strength $2k$). 
Our bounds develop in two alternative scenarios. First, if a point $y \in F_2^n$, where the covering radius of some OA $C \subset F_2^n$ of strength $2k$ is realized, is such that the farthest point of $C$ to $y$ is not antipodal\footnote{Two points in $F_2^n$ are antipodal (each other) if the Hamming distance between them in $n$.} to $y$ we obtain a bound which is better than \eqref{FL_bound}. Second, if all points where the covering radius is realized are such that their antipodes are in $C$, we obtain a bound which depends on the cardinality of $C$ and is again better than \eqref{FL_bound} whenever the orthogonal array is not tight (i.e., its cardinality achieves the Rao lower bound). 

We also describe three constructions of certain infinite families of linear binary orthogonal arrays of even strength. For these OAs, we derive a lower bound (in terms of distances) for their covering radius. A comparison with the linear programming bounds is presented. 

In the first construction, for each positive integer $e$, 
we present an infinite sequence of linear binary orthogonal arrays consisting of $\mathcal{O}^{(1)}(m)$ 
as $m \rightarrow \infty$.
Here, the index $m$ of the sequence corresponds to all positive integers except 
a finite number of small integers depending on $e$. 
In this sequence $\{\mathcal{O}^{(1)}(m)\}_{m \rightarrow \infty}$ of binary orthogonal arrays, 
for the length $n$ and the cardinality $M^{(1)}(m)$ of $\mathcal{O}^{(1)}(m)$ we have
$$
n=2^m-1, \;\; M^{(1)}(m)=2^{me},
$$
and for the strength $\tau$ of $\mathcal{O}^{(1)}(m)$ we have
$$
\tau=2e.
$$
Moreover, we obtain a nontrivial lower bound on the covering radius $R(\mathcal{O}^{(1)}(m))$, which implies that
$$
\limsup_{m \rightarrow \infty} \frac{\log_2 R(\mathcal{O}^{(1)}(m))}{\log_2 M^{(1)}(m)} \ge \frac{1}{e}.
$$

In the second construction, for each integer $m \ge 4$, 
we present a linear binary orthogonal array $\mathcal{O}^{(2)}(m)$.
For the length $n$ and the cardinality $M^{(2)}(m)$ of $\mathcal{O}^{(2)}(m)$ we have
$$
n=2^m-1, \;\; M^{(2)}(m)=2^{2m},
$$
and for the strength $\tau$ of $\mathcal{O}^{(2)}(m)$ we have
$$
\tau=\left\{
\begin{array}{ll}
2 & \mbox{if $m$ is even}, \\
4 & \mbox{if $m$ is odd}.
\end{array}
\right.
$$

In the third construction, for each integer $m \ge 2$, 
we present a linear binary orthogonal array $\mathcal{O}^{(3)}(m)$.
For the length $n$ and the cardinality $M^{(3)}(m)$ of $\mathcal{O}^{(3)}(m)$ we have
$$
n=2^{2m}+1, \;\; M^{(3)}(m)=2^{4m},
$$
and for the strength $\tau$ of $\mathcal{O}^{(3)}(m)$ we have
$\tau=4$.

Again, we obtain a nontrivial lower bounds on the covering radius $R(\mathcal{O}^{(2)}(m))$ and $R(\mathcal{O}^{(3)}(m))$, which imply that
$$
\limsup_{m \rightarrow \infty} \frac{\log_2 R(\mathcal{O}^{(j)}(m))}{\log_2 M^{(j)}(m)} \ge \frac{1}{2}
$$
for $j=2,3$.

These three constructions give rise to three infinite families of binary orthogonal arrays associated with the duals of BCH, Melas, and Zetterberg codes. Determining the exact covering radii of these three families appears to be a natural and difficult problem, and remains open in general. To study this question, we use two different approaches. For the lower bounds, we employ techniques based on algebraic curves over finite fields, together with more detailed methods adapted to the particular families considered here. For the upper bounds, we use the improved linear programming techniques for the covering radius developed below in this paper. In our view, both methods are of independent interest. Combined together, they yield both lower and upper bounds for the covering radii of these families, and in some cases the resulting bounds are quite close.

The paper is organized as follows. In Section 2 we prepare for linear programming (LP) as we explain Krawtchouk and adjacent polynomials, their relations with the OAs, the LP bound for covering radius of OAs before turning to the more specific positive definite signed measures and related orthogonal polynomials. Lemma \ref{lem_pos_def} establishes that the relevant signed measure $\mu_n^{(\ell)}(t)$ is positive definite up to degree $k-1$, where $k$ will be defined later as the half of the OA's strength $\tau$. Section 3 is devoted to the properties of the orthogonal polynomials related to the measure $\mu_n^{(\ell)}(t)$. In Section 4 we obtain our LP bound in two cases. The combination of the two cases presents a universal (in the sense of Levenshtein) bound. In Section 6 we provide three constructions of binary OAs (dual BCH codes, Melas codes, and Zetterberg codes) and derive lower bounds (in terms of distances) for the covering radii of their duals. We compare the LP bounds with the bounds of our constructions for small lengths. 

\section{Preliminaries}

\subsection{Krawtchouk polynomials}

Let $n \geq 2$ be a positive integer. The (binary) Krawtchouk polynomials are defined as
\[ K_i^{(n)}(z):=\sum_{j=0}^i (-1)^j {z \choose j} {n-z \choose i-j}, \  i=0,1,\ldots,n, \]
where $\binom{z}{j} := z(z-1)\cdots(z-j+1)/j!$, $z \in \mathbb{R}$.

The polynomials $K_i^{(n)}(z)$ satisfy the following three-term recurrence relation
\[ (i+1)K_{i+1}^{(n)}(z) = (n-2z)K_i^{(n)}(z) - 
(n-i+1)K_{i-1}^{(n)}(z),	\]
where $K_0^{(n)}(z)=1$ and $K_1^{(n)}(z)=n-2z$. 

We consider the variable change $z=n(1-t)/2$ (inverse to $t=1-2z/n$ used above) to map the distances in $F_2^n$ (the set 
$\{0,1,\ldots,n\}$) in the interval $[-1,1]$; i.e., to the set 
\[ T_n:=\left\{t_i:=-1+\frac{2i}{n}: i=0,1,\ldots,n\right\} \subset [-1,1]. \]
Denoting 
\[ \langle x,y \rangle = t(d(x,y)) :=1-\frac{2d(x,y))}{n}, \ x,y \in F_2^n. \]
we see that $d(x, y) = i$ is equivalent to $\langle x,y \rangle = t_{n-i} \in T_n$. In what follows, we will work with the numbers $\langle x,y \rangle \in T_n$ instead of the Hamming distances $d(x,y)$. 

For the necessary linear programming technique in the interval $[-1,1]$ we switch to the polynomials 
\begin{equation}\label{Kraw} 
Q_i^{(n)}(t) :=\frac{1}{r_i} K_i^{(n)}(z), \nn
\end{equation}
where $r_i:= \binom{n}{i}$, $i=0,1,\ldots,n$.  
In what follows, we will omit the index $(n)$ in the notation and will call $Q_i(t)$ again Krawtchouk polynomials. 

The polynomials $\{Q_i(t)\}_{i=0}^n$ form a basis of the space  of real polynomials of degree at most $n$ and satisfy the three-term recurrence relation 
\[ nt Q_i(t)=(n-i)Q_{i+1}(t)+i Q_{i-1}(t), \ i=1,2,\ldots,n-1, \] 
with the initial conditions $Q_0(t)=1$ and $Q_1(t)=t$. 

The discrete measure of orthogonality for $\{ Q_i(t) \}_{i=0}^n$ is given by 
\begin{equation} \label{KrawOrtho} 
\mu_n  := \frac{1}{2^n} \sum_{i=0}^n r_{i} \delta_{t_i}, \nn
\end{equation}
where $\delta_{t_i} $ is the Dirac-delta measure at the point $t_i \in T_n$. The form 
%\begin{equation}\label{InnerProd}
\[ \langle f,g \rangle=\int_{-1}^1 f(t) g(t) d\mu_n (t)= \frac{1}{2^n} \sum_{i=0}^n r_i f(t_i)g(t_i) \]
%\end{equation}
defines an inner product over the class of polynomials of degree at most $n$ as polynomials of degree at least $n+1$ are reduced modulo $\prod_{i=0}^n(t-t_i)$. 

With this inner product, every polynomial $f(t)$ of degree $r \leq n$ can be uniquely expressed as 
\begin{equation}\label{kr-exp}
f(t)= f_0+\sum_{j=1}^r f_jQ_j(t), 
\end{equation}
where the coefficients $f_i$, $i=0,1,\ldots,r$, can be computed by the formulas
\[ f_j=\frac{1}{2^n} \sum_{i=0}^n {n \choose i} f(t_{n-i}) Q_i(t_{n-j}). \]
In particular, for the linear programming setting, the most important coefficient $f_0$ can be written as
\[ f_0=\frac{1}{2^n}\sum_{i=0}^n {n \choose i} f(t_{n-i}). \]

\subsection{Orthogonal arrays and Krawtchouk polynomials}

The Krawtchouk polynomials are related to the structure of orthogonal arrays as they provide some rules on the distance distributions of OAs.  

\begin{definition}
Let $C \subset F_2^n$ and $y \in F_2^n$. The distance distribution of $C$ with respect to $y$ is the $(n+1)$-tuple
\[ w = w(y) = [w_0(y),w_1(y), \ldots, w_n(y)], \]
where $w_i(y) = |\{x \in C : \ \langle x,y \rangle =t_{n-i} \iff  d(x, y) = i\}|$, $i = 0, \ldots, n$.
\end{definition}

The distance distribution concept allows us to utilize a linear programming approach to the covering radius of orthogonal arrays (cf., e.g., \cite{Tie90,Tie91,FL}). The next theorem of Delsarte gives the necessary algebraic characterization of the orthogonal arrays. 

\begin{theorem} \cite{Del-dis}
\label{th:pi-f} 	
Let $C$ be a binary $OA(M,n,\tau)$ and $y \in F_2^n$. If $w(y)=(w_0,w_1,\ldots,w_n)$ is the distance distribution of $C$ 
with respect to $y$, then for any polynomial $f(t)=f_0+\sum_{j=1}^{\tau} f_jQ_j(t)$ of degree at most $\tau$, the following hold
	\begin{equation} \label{dd-sys}
	\sum_{i=0}^n w_i f(t_{n-i})=f_0 M.
	\end{equation}
The converse is also true; i.e., if \eqref{dd-sys} holds for every $y \in F_2^n$ and every $f$ of degree at most $\tau$, then $C$ is a binary $OA(M,n,\tau)$.
\end{theorem}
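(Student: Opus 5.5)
By linearity of \eqref{dd-sys} in $f$, it suffices to check the identity on the basis $Q_0,Q_1,\ldots,Q_\tau$. For $f=Q_0=1$ it reads $\sum_i w_i=M$, which is trivially true. So the direct statement reduces to showing that, for every $y\in F_2^n$ and every $1\le j\le \tau$,
\[ \sum_{x\in C} Q_j(\langle x,y\rangle)=0 . \]

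I would prove this with characters. For $x\in F_2^n$ and $u\in F_2^n$ of weight $j$, the character $\chi_u(x)=(-1)^{u\cdot x}$ satisfies
\[ \sum_{\mathrm{wt}(u)=j}(-1)^{u\cdot z}=K_j^{(n)}(\mathrm{wt}(z)) . \]
This is just the definition of $K_j$: split $u$ according to how many of its $j$ ones fall in the support of $z$. With $z=x+y$ and $d(x,y)=\mathrm{wt}(z)$ we get $Q_j(\langle x,y\rangle)=r_j^{-1}K_j(d(x,y))$. Summing over $x\in C$ gives
\[ \sum_{x\in C} Q_j(\langle x,y\rangle)=\frac{1}{r_j}\sum_{\mathrm{wt}(u)=j}(-1)^{u\cdot y}\sum_{x\in C}(-1)^{u\cdot x}. \]
Now fix $u$ with $1\le \mathrm{wt}(u)\le\tau$. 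The quantity $u\cdot x$ depends only on the restriction of $x$ to the support of $u$, which is a set of at most $\tau$ coordinates. By the OA property (Definition \ref{def-oa}, applied to any $\tau$ columns containing $\mathrm{supp}(u)$), every pattern on those coordinates appears equally often among the rows of $C$. Hence $\sum_{x\in C}(-1)^{u\cdot x}=\lambda 2^{\tau-\mathrm{wt}(u)}\sum_{v\in F_2^{\mathrm{wt}(u)}}(-1)^{\sum v_i}=0$, and the inner sums vanish.

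For the converse, assume \eqref{dd-sys} holds for all $y$ and all $f$ of degree $\le\tau$. Then $S_j(y):=\sum_{\mathrm{wt}(u)=j}(-1)^{u\cdot y}\hat C(u)=0$ for all $y$ and $1\le j\le\tau$, where $\hat C(u)=\sum_{x\in C}(-1)^{u\cdot x}$. As a function of $y$, $S_j$ is the Fourier transform of the function $u\mapsto \hat C(u)\mathbf 1_{\mathrm{wt}(u)=j}$. Fourier inversion on $F_2^n$ therefore gives $\hat C(u)=0$ for all $1\le\mathrm{wt}(u)\le\tau$.

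It then remains to show that this vanishing forces the OA property. Fix a set $S$ of $\tau$ coordinates and let $N(a)$ be the number of rows whose restriction to $S$ equals $a\in F_2^\tau$. The Fourier transform of $N$ on $F_2^\tau$ at $v\neq0$ is $\hat C(u)$, where $u$ extends $v$ by zeros, so it vanishes. Thus $N$ is constant, equal to $M/2^\tau$.

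I expect the only delicate point to be keeping the variable change $t\leftrightarrow d$ and the normalization $r_j$ consistent; the rest is standard orthogonality of characters.
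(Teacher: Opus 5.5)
Your proof is correct and complete in both directions. The identity $\sum_{\mathrm{wt}(u)=j}(-1)^{u\cdot z}=K_j^{(n)}(\mathrm{wt}(z))$, together with $Q_j(\langle x,y\rangle)=K_j^{(n)}(d(x,y))/r_j$, reduces everything to the vanishing of $\hat C(u)$ for $1\le\mathrm{wt}(u)\le\tau$. You handle all three steps properly:
\begin{itemize}
\item OA property $\Rightarrow$ $\hat C(u)=0$;
\item the Fourier inversion in $y$ that recovers $\hat C(u)=0$ from the pointwise identities;
\item $\hat C(u)=0$ $\Rightarrow$ equidistribution on every $\tau$-set of columns.
\end{itemize}

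The paper itself gives no proof of this theorem; it only cites Delsarte. Delsarte's route is phrased in association-scheme language. An OA of strength $\tau$ is a $\tau$-design of the Hamming scheme, characterized by $E_j\mathbf{1}_C=0$ for $1\le j\le\tau$, where $E_j$ is the primitive idempotent with entries $2^{-n}K_j^{(n)}(d(x,y))$. Equivalently, the MacWilliams transform of the inner distribution vanishes in positions $1,\dots,\tau$, i.e.\ the dual distance is at least $\tau+1$, which is the relation $\tau=d^\perp-1$ the paper mentions. In that framework one passes between the pairwise statement and the pointwise one via $\mathbf{1}_C^{T}E_j\mathbf{1}_C=\|E_j\mathbf{1}_C\|^2$.

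Your identity $\sum_{x\in C}Q_j(\langle x,y\rangle)=0$ is exactly the $y$-coordinate of $E_j\mathbf{1}_C=0$, and your character-sum identity is the spectral decomposition $E_j=2^{-n}\sum_{\mathrm{wt}(u)=j}\chi_u\chi_u^{T}$. So your argument is the same mechanism made explicit. What your version buys is self-containedness: it needs no inner distribution, MacWilliams transform, or idempotent calculus, just orthogonality of characters on $F_2^n$ and $F_2^\tau$. What Delsarte's formalism buys is uniformity: the same proof works verbatim for $q$-ary Hamming spaces and other $Q$-polynomial schemes.
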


The relation \eqref{dd-sys} provides a system of linear equations related to the distance distribution of $C$ with respect to any point $x$ (i.e., to the structure of $C$ with respect to $x$). With a suitably chosen $x$ one can look at the covering radius of $C$. This idea was first developed by  Tiet{\"a}v{\"a}inen \cite{Tie90,Tie91} and generalized for polynomials metric spaces by Fazekas-Levemshtein \cite{FL}. 

\subsection{Linear programming for covering radius of orthogonal arrays}

Assume that $f$ is a polynomial of degree at most $\tau$ which is non-positive on the interval $[-1,s]$ but still has positive Krawtchouk coefficient $f_0>0$. If the covering radius of an $OA(n,M,\tau)$ is less than $s$, then \eqref{dd-sys}, applied for that OA and a point $x$ where the covering radius is met, would give a contradiction. Therefore, the covering radius of that OA is at least $s$ and we have the following general linear programming theorem (see Section 2 in \cite{FL} for this form of exposition).

\begin{theorem} \cite{Tie90,Tie91,FL} 
\label{th:tie} 	
Let $\tau$ be a positive integer, $s \in [-1,1)$ be a real number, and the polynomial $f \in \mathbb{R}[t]$ satisfy 

{\rm (A1)} $f(t) \leq 0 \ \forall t \in [-1,s]$;

{\rm (A2)} $\deg(f) \leq \tau$;

{\rm (A3)} $f_0 > 0$ in the Krawtchouk expansion $f(t)=\sum_{i=0}^{\tau} f_iQ_i(t)$.

\noindent
Then the covering radius of any binary $OA(n,M,\tau)$ satisfies
$\rho(C) \geq s$.
\end{theorem}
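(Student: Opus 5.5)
The argument is by contradiction, feeding Delsarte's identity (Theorem \ref{th:pi-f}) with a point $y$ at which the covering radius is attained. Let $C$ be a binary $OA(M,n,\tau)$ and suppose $\rho(C)<s$. Since $\rho(C)=\min_{y}\max_{x\in C}\langle x,y\rangle$ and the minimum is over the finite set $F_2^n$, we can fix $y\in F_2^n$ with $\max_{x\in C}\langle x,y\rangle=\rho(C)<s$. Hence every $x\in C$ satisfies $\langle x,y\rangle\in T_n\cap[-1,s)$.

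Next take the distance distribution $w(y)=(w_0,\dots,w_n)$ of $C$ with respect to $y$. All $w_i$ are nonnegative integers and $\sum_i w_i=M>0$. Moreover, $w_i>0$ only for those $i$ with $t_{n-i}=\langle x,y\rangle<s$ for some $x\in C$. By (A2), $f$ has degree at most $\tau$, so we may write $f=f_0+\sum_{j=1}^{\tau}f_jQ_j$, and Theorem \ref{th:pi-f} gives
\[ \sum_{i=0}^n w_i f(t_{n-i}) = f_0 M. \]
By (A3), $f_0>0$ and $M>0$, so the right-hand side is strictly positive.

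On the left-hand side, each term with $w_i\neq 0$ has $t_{n-i}\in[-1,s)\subset[-1,s]$. By (A1) such a term satisfies $f(t_{n-i})\le 0$, so the whole sum is $\le 0$. This contradicts the strict positivity of $f_0M$, and therefore $\rho(C)\ge s$.

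The argument has no real obstacle. The only care needed is to use that the minimum defining $\rho(C)$ is attained, and to note that the same argument also works with the non-strict assumption $\rho(C)\le s$, because (A1) holds on the closed interval $[-1,s]$.
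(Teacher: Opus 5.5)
Your proof is correct and follows the paper's own argument: apply Delsarte's identity \eqref{dd-sys} at a point $y$ where the covering radius is attained; (A1) then makes the left-hand side nonpositive, while (A3) makes $f_0M$ strictly positive. Your closing remark is also right: the same reasoning rules out $\rho(C)=s$, so each admissible $f$ actually gives $\rho(C)>s$, which is consistent with the paper later letting $\varepsilon\to 0$ in the proof of Theorem \ref{impr_fl}.
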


The next step is the maximization of $s$ for given $n$ and $\tau$ (note that $M$ is not involved). 
The resulting bound is (\ref{FL_bound}) which was proved first by Tiet{\"a}v{\"a}inen in 1990. For even $\tau=2k$ it is attained when $C$ is a tight (binary) OA of strength $2k$ (see the examples on pages 283-284 in \cite{FL}).

\begin{table}
\begin{center}
\caption{ Fazekas--Levenshtein lower bounds (Tiet{\"a}v{\"a}inen upper bounds) for the covering radius of binary orthogonal arrays in $F_2^n$ of strength $1 \leq \tau \leq 8$. }
\label{tab:1}
\begin{tabular}{|c|c|c|c|c|}
  \hline\noalign{\smallskip}
  Strength & Lower bound & Upper bound  \\
    &  & (in terms of distances)  \\
  $\tau = 2k-1+e$ & $\rho(C) \geq t_k^{0,e}$ & $R(C) \leq  d_k^{(n-e)}$ \\
  \noalign{\smallskip}\hline\noalign{\smallskip}  
  1 &  $t_1^{0,0} = 0$ &  $d_1^{(n)} = \frac{n}{2}$ \\
  \noalign{\smallskip}\hline\noalign{\smallskip}  
  2 &  $t_1^{0,1} = \frac{1}{n}$ & $d_1^{(n-1)} = \frac{n-1}{2}$ \\
  \noalign{\smallskip}\hline\noalign{\smallskip}  
  3 &  $t_2^{0,0} = \frac{1}{\sqrt{n}}$ & $d_2^{(n)} = \frac{n-\sqrt{n}}{2}$ \\
  \noalign{\smallskip}\hline\noalign{\smallskip}  
  4 &  $t_2^{0,1} = \frac{\sqrt{n-1}+1}{n}$ & $d_2^{(n-1)} = \frac{n-1-\sqrt{n-1}}{2}$ \\
  \noalign{\smallskip}\hline\noalign{\smallskip}
  5 &  $t_3^{0,0} = \frac{\sqrt{3n-2}}{n}$ & $d_3^{(n)} = \frac{n-\sqrt{3n-2}}{2}$ \\
  \noalign{\smallskip}\hline\noalign{\smallskip}
  6 &  $t_3^{0,1} = \frac{\sqrt{3n-5}+1}{n}$ & $d_3^{(n-1)} = \frac{n-1-\sqrt{3n-5}}{2}$ \\
  \noalign{\smallskip}\hline\noalign{\smallskip}
  7 &  $t_4^{0,0} = \frac{\sqrt{3n-4+\sqrt{6n^2-18n+16}}}{n}$ & $d_4^{(n)} = \frac{n-\sqrt{3n-4+\sqrt{6n^2-18n+16}}}{2}$ \\
  \noalign{\smallskip}\hline\noalign{\smallskip}
  8 &  $t_4^{0,1} = \frac{\sqrt{3n-7+\sqrt{6n^2-30n+40}}+1}{n}$ & $d_4^{(n-1)} = \frac{n-1-\sqrt{3n-7+\sqrt{6n^2-30n+40}}}{2}$ \\  
  \hline\noalign{\smallskip}
\end{tabular}
\end{center}
\end{table}

\subsection{Adjacent polynomials}

Levenshtein (cf. \cite[Section3]{Lev92}, \cite[Sections 3 and 6]{Lev98} introduced the so-called adjacent (to $Q_i(t)$) polynomials denoted by $Q_i^{a,b}(t)$, where $a,b \in \{0,1\}$, to serve in the linear programming framework (note that $Q_i^{0,0}(t)=Q_i(t)$). For example, the polynomials $Q_i^{1,0}(t)$ and
$Q_i^{1,1}(t)$ were used in the formulation, proof, and final representation of the Levenshtein universal bounds for maximal cardinality of codes with prescribed length and minimum distance \cite{Lev92,Lev95,Lev98} and, similarly, the polynomials $Q_i^{0,1}(t)$ were utilized for the universal bounds for designs (again \cite{Lev92,Lev95,Lev98}) and their covering radius (see \cite{FL}).

We shall need the polynomials $Q_i^{0,1}(t)$ to explain the Fazekas-Levenshtein bound and their extensions 
$Q_i^{0,\ell}(t)$ to obtain and explain our bound. We notice the explicit relation \cite{FL,Lev98}
\[ Q_i^{0,1}(t) = \frac{K_i^{(n-1)}(n(1-t)/2)}{{n-1 \choose i}} \]
between the $(0,1)$-polynomials and the Krawtchouk polynomials. We note that with $\ell=-1$ as the boundary case, in our notation the polynomials $Q_i^{0,1}(t)$ would be $Q_i^{0,-1}(t)$.

\subsection{Positive definite signed measures}

In this and the following subsections we develop a technique to be applied for a deeper investigation of the covering radius of orthogonal arrays relying on analysis of the geometry around the points which are antipodal to points where the covering radius is met. We will introduce certain signed measure and will derive their properties which will give us the necessary tools. 

\begin{definition} 
A signed Borel measure $\mu$ on $\mathbb{R}$ for which all polynomials are integrable
is called {\it positive definite up to degree $m$} if for all real polynomials $p \not\equiv 0$
of degree at most $m$ we have $\int p^2 (t) d \mu(t) > 0$.  For such $\mu$, the bi-linear form
\begin{equation}\label{ipsm}
\langle f, g\rangle_{\mu}:=\int f(t)g(t)\, d\mu(t), \nn
\end{equation}
is an inner product on the space $\mathcal{P}_m$.
\end{definition} 

Signed (discrete) measures for linear programming in $F_q^n$ were considered in \cite{BDHSS-ieee}, where it was proved that the signed measures
\[ d\nu_\ell(t) := (t-\ell)(1-t)d\mu_n (t),\ \ 
d\nu_s(t) := (s-t)(1-t)d\mu_n(t) \]
under certain natural conditions for the parameters $s$ and $\ell$ are positive definite up to degree $k-1$. Here we need a simpler, in a sense, signed measure, which was not covered in \cite{BDHSS-ieee}.

Given $\ell \in [-1,0)$ we define the signed measure on $[-1,1]$
\[ d \mu_n^{(\ell)}(t) := c_{n,\ell} (t-\ell) d \mu_n(t), \ \ \ t \in [-1,1], \]
where $c_{n,\ell} := - 1/\ell>0$ is a normalizing constant.
We shall prove that $d \mu_n^{(\ell)}(t)$ is positive definite up to 
degree $k-1$ under certain natural assumption. The analogous property of the Euclidean counterpart of $d \mu_n^{(\ell)}(t)$ was proved in \cite[Lemma 2.2]{BDHSS-DCC2019}. 

Let
\begin{equation} \label{roots-q} 
t_{k,1}< t_{k,2}< \cdots < t_{k,k} 
\end{equation}
be the roots of the Krawtchouk $Q_k(t)$ of degree $k$, ordered increasingly. Note that $-1<t_{k,1}$ and $t_{k,k}<1$ as usual in the theory of orthogonal polynomials. 

\begin{lemma} \label{lem_pos_def} For given $k>1$, let $\ell$  satisfy $\ell < t_{k,1}$. Then the measure $d\mu_{n}^{(\ell)} (t)$ is positive definite up to degree $k-1$. \end{lemma}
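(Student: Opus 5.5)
The plan is to reduce positivity of $\int p^2\,d\mu_n^{(\ell)}$ to a Gauss-type quadrature formula for the Krawtchouk measure $\mu_n$, with nodes at the zeros \eqref{roots-q} of $Q_k(t)$. Since $c_{n,\ell}=-1/\ell>0$, it suffices to show that
\[ \int_{-1}^1 p^2(t)(t-\ell)\,d\mu_n(t)>0 \]
for every nonzero real polynomial $p$ of degree at most $k-1$. Here $f(t):=p^2(t)(t-\ell)$ has degree at most $2k-1$. We may assume $k\le n$; this is implicit in the setting, where $2k$ is the strength of an OA in $F_2^n$. Then $Q_k$ is a genuine orthogonal polynomial with respect to the inner product $\langle\cdot,\cdot\rangle$ induced by $\mu_n$ on $\mathcal P_n$, and its zeros $t_{k,1}<\cdots<t_{k,k}$ are real, simple and lie in $(-1,1)$, as recalled before the lemma.

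The first step is to establish the quadrature identity
\[ \int_{-1}^1 f(t)\,d\mu_n(t)=\sum_{i=1}^k \omega_i f(t_{k,i}), \qquad \deg f\le 2k-1, \]
with weights $\omega_i>0$. This is the standard argument.
\begin{itemize}
\item Divide $f$ by $Q_k$ to get $f=qQ_k+r$ with $\deg q\le k-1$ and $\deg r\le k-1$.
\item Since $Q_k$ is orthogonal to $\mathcal P_{k-1}$ with respect to $\mu_n$ and $2k-1\le 2n$, the term $\int qQ_k\,d\mu_n$ vanishes. One point needs care: $\mu_n$ is supported on $n+1$ points, so one should check that the inner product is computed honestly, without reduction modulo $\prod_{i=0}^n(t-t_i)$. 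That holds here because the integral is just the finite sum $\frac{1}{2^n}\sum r_i f(t_i)$, and orthogonality of $Q_k$ to lower degrees holds for $k\le n$.
\item Write $r=\sum_i r(t_{k,i})L_i$ in the Lagrange basis $L_i$ at the nodes $t_{k,i}$, and note $r(t_{k,i})=f(t_{k,i})$. This gives the formula with $\omega_i=\int L_i\,d\mu_n$.
\item Applying the identity to $L_i^2$, which has degree $2k-2$, gives $\omega_i=\int L_i^2\,d\mu_n$. This is strictly positive because $\mu_n$ has positive masses at $n+1\ge k$ points and the nonzero polynomial $L_i$ cannot vanish at all of them.
\end{itemize}

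The second step applies the quadrature to $f=p^2(t-\ell)$:
\[ \int p^2(t)(t-\ell)\,d\mu_n(t)=\sum_{i=1}^k \omega_i\, p(t_{k,i})^2\,(t_{k,i}-\ell). \]
The hypothesis $\ell<t_{k,1}$ gives $t_{k,i}-\ell>0$ for all $i$, so every term is nonnegative. The sum is zero only if $p(t_{k,i})=0$ for all $k$ distinct nodes. That is impossible for a nonzero $p$ of degree at most $k-1$, so the integral is strictly positive, and hence $\mu_n^{(\ell)}$ is positive definite up to degree $k-1$.

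The main obstacle is the justification of the Gauss quadrature in the discrete setting. This means confirming that $Q_k$ has $k$ simple zeros inside $(-1,1)$, that the orthogonality $\langle qQ_k,1\rangle=0$ holds for $\deg q\le k-1$ when $k\le n$, and that the weights are strictly positive. All of these follow from $\mu_n$ being a positive measure on $n+1\ge k+1$ points, which makes it positive definite on $\mathcal P_n$. The rest of the argument is routine.
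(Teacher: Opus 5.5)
Your proposal is correct and follows essentially the same route as the paper: a Gauss-type quadrature at the zeros of $Q_k$, exact up to degree $2k-1$ via division by $Q_k$, applied to $p^2(t)(t-\ell)$. Your explicit check that the weights $\omega_i=\int L_i^2\,d\mu_n$ are strictly positive, using normalized Lagrange polynomials, supplies a step the paper's proof leaves implicit (its $L_i$ are unnormalized and the positivity of $\rho_i$ is never argued).
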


{\it Proof.}
We employ a quadrature formula defined on the set of the roots \eqref{roots-q} of the polynomial $Q_k(t)$ as nodes. Using the associated Lagrange basis polynomials 
\begin{equation} \label{L-poly}
L_i:=\prod_{j \neq i} (t-t_{k,j}), \ i=1,2,\dots, k,
\end{equation}
we define the corresponding formula weights 
\[ \rho_i:= \int_{-1}^1 L_i (t) d\mu_n(t),  \ i=1,2, \dots, k. \]
Then we verify that the quadrature formula
\begin{equation}\label{v_quadrature}
f_0:= \int_{-1}^1 f(t) d\mu(t)= \sum_{i=1}^k \rho_i f(t_{k,i} )
\end{equation}
is exact for all polynomials of degree up to $2k-1$.

It is immediate that \eqref{v_quadrature} holds for polynomials of degree at most $k-1$ since the Lagrange polynomials \eqref{L-poly} form a basis of the space of polynomials of degree at most $k-1$. For polynomials $f$ of degree $k \leq \deg(f) \leq 2k-1$ we use the unique representation 
\[ f(t)=q(t)Q_k(t)+r(t)=q(t)\prod_{i=1}^k (t-t_{k,i}) +r(t),\]
where $\deg(r) \leq k-1$, to see that \eqref{v_quadrature} holds for $f$. 

Using \eqref{v_quadrature} for the square of any polynomial $p(t)$ of degree at most $k-1$, we obtain
\begin{eqnarray*}
\int_{-1}^1 p^2(t) d\mu_n^{\ell} (t) = \int_{-1}^1 p^2(t)(t-\ell) d\mu_n(t)
= \sum_{i=1}^k \rho_i p^2(t_{k,i} )(t_{k,i} -\ell) \geq 0,
\end{eqnarray*}
with equality if and only if $p(t)\equiv 0$. Therefore, the measure $d\mu_n^{\ell} (t)$ is positive definite up to degree $k-1$. \hfill $\Box$

\section{Properties of the polynomials $Q_i^{0,\ell} (t)$}

Lemma \ref{lem_pos_def} implies the existence (via the Gramm-Schmidt ortogonalization procedure) of a finite sequence of polynomials
$(Q_i^{0,\ell}(t))_{i=0}^k$ which are orthogonal with respect to the measure $d \mu_n^{\ell}(t)$. Moreover, with normalization $Q_i^{0,\ell}(1)=1$ these polynomials are uniquely determined. 
This allows us to find explicit formulas
the polynomials $Q_i^{0,\ell}(t)$ and to investigate their main properties. 

We derive the necessary properties of the series 
$\{Q_i^{0,\ell}(t)\}_{i=0}^{k}$. The results and proofs are parallel to the corresponding results and proofs in the Euclidean case \cite{BS21}. 

\subsection{Explicit formula for the polynomials $Q_i^{0,\ell}$}

Consider the Christoffel-Darboux kernel that corresponds to the Krawtchouk polynomials (cf. \cite{Sze})
\[ T_i(u,v) := \sum_{j=0}^i r_j Q_j(u) Q_j(v). \]

\begin{theorem}
\label{explicit-q}
Let $\ell$ and $k$ be such that $t_{k+1,1}<\ell<t_{k,1}$. Then
\begin{equation} \label{Pi0-ell}
Q_i^{0,\ell}(t) = \frac{T_i (t,\ell)}{T_i(1,\ell)} = m_i^{0,\ell} t^i + \cdots,\quad i=0,1,\dots, k,
\end{equation}
with $m_i^{0,\ell}>0$. 
\end{theorem}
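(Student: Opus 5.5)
We prove the theorem by showing that the kernel polynomials $T_i(t,\ell)$, viewed as polynomials in $t$, are orthogonal with respect to $d\mu_n^{(\ell)}$; the normalization and the sign of the leading coefficient then follow from the interlacing of Krawtchouk zeros. The starting point is the reproducing property of the Christoffel--Darboux kernel with respect to $\mu_n$. Since $\langle Q_j,Q_m\rangle = \delta_{jm}/r_j$ (as $\int Q_j^2\,d\mu_n = 1/r_j$ for the normalization $Q_j(1)=1$), we get
\[ \int_{-1}^1 T_i(t,\ell)\,p(t)\,d\mu_n(t) = p(\ell) \]
for every polynomial $p$ with $\deg p\le i$.

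\textbf{Orthogonality.} Fix $j<i\le k$ and take $p(t)=(t-\ell)q(t)$ with $\deg q\le j\le i-1$, so $\deg p\le i$. Then
\[ \int T_i(t,\ell)\,q(t)\,d\mu_n^{(\ell)}(t) = c_{n,\ell}\,p(\ell)=0 . \]
Hence $T_i(\cdot,\ell)$ is orthogonal to all polynomials of degree $<i$ with respect to $\mu_n^{(\ell)}$. To identify $T_i(\cdot,\ell)$ with $Q_i^{0,\ell}$, we also need $\deg T_i(\cdot,\ell)=i$ exactly, i.e. its leading coefficient $r_i\,\mathrm{lc}(Q_i)\,Q_i(\ell)$ must be nonzero. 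This requires $Q_i(\ell)\neq0$.

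\textbf{Signs.} Since $\ell<t_{k,1}\le t_{i,1}$ (the smallest zeros decrease as the degree grows, by interlacing), $\ell$ lies to the left of all zeros of $Q_i$ for $i\le k$. Therefore $\mathrm{sign}\,Q_i(\ell)=(-1)^i$, and in particular $Q_i(\ell)\ne0$. By Lemma~\ref{lem_pos_def}, orthogonal polynomials for $\mu_n^{(\ell)}$ up to degree $k$ are unique up to scalars. It remains to check $T_i(1,\ell)\neq 0$ and to compute the sign of $m_i^{0,\ell}=r_i\,\mathrm{lc}(Q_i)\,Q_i(\ell)/T_i(1,\ell)$. Here $\mathrm{lc}(Q_i)>0$ and $\mathrm{sign}\,Q_i(\ell)=(-1)^i$.

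\textbf{Main obstacle: the sign of $T_i(1,\ell)$.} For this we use the Christoffel--Darboux formula
\[ T_i(1,\ell)=c_i\,\frac{Q_{i+1}(1)Q_i(\ell)-Q_i(1)Q_{i+1}(\ell)}{1-\ell}=c_i\,\frac{Q_i(\ell)-Q_{i+1}(\ell)}{1-\ell}, \]
where $c_i>0$ comes from the recurrence $ntQ_i=(n-i)Q_{i+1}+iQ_{i-1}$ (explicitly, $c_i$ is a positive multiple of $r_i(n-i)/n$). For $i\le k-1$, both $Q_i(\ell)$ and $Q_{i+1}(\ell)$ have signs $(-1)^i$ and $(-1)^{i+1}$, so the difference has sign $(-1)^i$. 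For $i=k$ we need $\mathrm{sign}\,Q_{k+1}(\ell)=(-1)^{k+1}$, which is exactly where the hypothesis $t_{k+1,1}<\ell$ enters. Since $\ell$ lies between the first and second zeros of $Q_{k+1}$ (because $t_{k+1,2}>t_{k,1}>\ell$ by interlacing), $Q_{k+1}(\ell)$ has sign $(-1)^{k}$, opposite to $(-1)^{k+1}$. Hence the difference $Q_k(\ell)-Q_{k+1}(\ell)$ is a difference of two numbers of the same sign $(-1)^k$, and we must show $|Q_k(\ell)|>|Q_{k+1}(\ell)|$.

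We plan to obtain this from the interlacing/monotonicity of the ratio $Q_{k+1}(t)/Q_k(t)$ on $(-\infty,t_{k,1})$. This ratio vanishes at $t_{k+1,1}$ and equals $1$ at $t=1$, and we would compare it with $1$ on $(t_{k+1,1},t_{k,1})$. If this comparison proves delicate, an alternative is to observe that $T_i(1,\ell)=\int T_i(t,\ell)^2\,d\mu_n(t)$ up to a reproducing identity together with $T_i(1,\ell)$ being the value at $1$. Once the sign of $T_i(1,\ell)$ is fixed as $(-1)^i$, we get $m_i^{0,\ell}>0$ and $Q_i^{0,\ell}(1)=1$, which completes the proof.
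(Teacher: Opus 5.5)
\textbf{The gap is at $i=k$.} Your orthogonality step via the reproducing property is fine; it is equivalent to the paper's observation that $(t-\ell)T_i(t,\ell)$ is a combination of $Q_{i+1}$ and $Q_i$. You also correctly reduce $m_i^{0,\ell}>0$ to $\mathrm{sign}\,T_i(1,\ell)=(-1)^i$, where $T_i(1,\ell)=r_i\frac{n-i}{n}\cdot\frac{Q_i(\ell)-Q_{i+1}(\ell)}{1-\ell}$, and this settles $i\le k-1$.

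For $i=k$ you need $Q_{k+1}(\ell)/Q_k(\ell)<1$. This does not follow from $t_{k+1,1}<\ell<t_{k,1}$; it is false on part of that interval, so your plan cannot be completed. As $\ell\to t_{k,1}^-$ we have $Q_k(\ell)\to 0$, while $Q_{k+1}(t_{k,1})\neq 0$. Hence $|Q_k(\ell)|>|Q_{k+1}(\ell)|$ fails near $t_{k,1}$.

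More precisely, the confluent Christoffel--Darboux identity $\sum_{j=0}^k r_jQ_j(t)^2=r_k\frac{n-k}{n}\bigl(Q_{k+1}'(t)Q_k(t)-Q_k'(t)Q_{k+1}(t)\bigr)>0$ shows that $Q_{k+1}/Q_k$ increases strictly from $0$ to $+\infty$ on $(t_{k+1,1},t_{k,1})$. So it equals $1$ at a unique point $\ell^*$ there. At $\ell=\ell^*$ you get $T_k(1,\ell)=0$, so the normalization $Q_k^{0,\ell}(1)=1$ is impossible. For $\ell^*<\ell<t_{k,1}$ you get $m_k^{0,\ell}<0$.

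The paper's own formula \eqref{Ex3} shows this for $k=1$. There $Q_1^{0,\ell}(t)=(1+n\ell t)/(1+n\ell)$ has leading coefficient $n\ell/(1+n\ell)<0$ for $-1/n<\ell<0$, and it is undefined at $\ell=-1/n$. Yet all these $\ell$ satisfy $t_{2,1}=-1/\sqrt n<\ell<0=t_{1,1}$.

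Both of your fallback ideas fail for the same reason. The value $1$ of the ratio at $t=1$ is separated from $(t_{k+1,1},t_{k,1})$ by poles at the zeros of $Q_k$, so it says nothing about that interval. The reproducing identity gives $\int T_k(t,\ell)^2\,d\mu_n(t)=T_k(\ell,\ell)$, not $T_k(1,\ell)$.

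\textbf{The defect is in the statement.} The paper's proof disposes of the sign with ``comparison of coefficients'' and has the same hole, so the problem lies in the statement for $i=k$, not only in your write-up. Your argument proves the corrected version verbatim: \eqref{Pi0-ell} with $m_i^{0,\ell}>0$ holds for $i\le k-1$. It holds for $i=k$ once you add the hypothesis $Q_{k+1}(\ell)/Q_k(\ell)<1$, which the paper itself imposes in Theorem \ref{roots-of-p0-ell} and Theorem \ref{impr_fl}. Under that hypothesis $Q_k(\ell)-Q_{k+1}(\ell)=Q_k(\ell)\bigl(1-Q_{k+1}(\ell)/Q_k(\ell)\bigr)$ has sign $(-1)^k$.

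Using $T_k(1,\ell)=T_k(-\ell,-1)$ together with $T_k(t,-1)\propto Q_k^{0,1}(t)$, one can identify $\ell^*=-t_k^{0,1}$. For example, $\ell^*=-1/n$ for $k=1$ and $\ell^*=-(1+\sqrt{n-1})/n$ for $k=2$. So the extra hypothesis simply reads $\ell<-t_k^{0,1}$.
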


{\it Proof.} It follows from the Christoffel-Darboux formula 
\begin{equation} \label{Ch-D}
\frac{T_i (t,\ell)}{T_i(1,\ell)} = \frac{(1-\ell)\left( Q_{i+1} (t) - Q_i(t)Q_{i+1}(\ell)/Q_{i}(\ell) \right)}
{(t-\ell)\left(1-Q_{i+1}(\ell)/Q_{i}(\ell)\right)}
\end{equation}
that the polynomial $(t-\ell)T_i(t,\ell)$ is
a linear combination of the polynomials $Q_{i+1}(t)$ and  $Q_i(t)$ for every $i \geq 0$. This immediately implies that the polynomial 
$T_i(t,\ell)$ (of degree $i$) itself is orthogonal to any polynomial of degree at most $i-1$ with respect to the measure $d\mu_{\ell}(t)$. Now
\eqref{Pi0-ell} follows from the positive definiteness of $d\mu_n^{(\ell)}(t)$ up to degree $k-1$,
the uniqueness of the Gram-Schmidt orthogonalization process and the normalization. The comparison of coefficients
in \eqref{Pi0-ell} shows that $m_i^{0,\ell}>0$, $i=0,1,\ldots,k$.
\hfill $\Box$

The boundary case $\ell=-1$ leads to the polynomials $Q_i^{0,1}(t)$, which are important ingredients in the Fazekas-Levenshtein framework. In this case the formula \eqref{Ch-D} coincides with (5.65) from \cite{Lev98}. 

We conclude this subsection with expicit formulas for the polynomials $Q_1^{0,\ell}$ and $Q_2^{0,\ell}$ that will be used later. We have 
\begin{equation} \label{Ex3}
   Q_1^{0,\ell}(t) = \frac{1+n\ell t}{1+n\ell}, \ \ 
   Q_2^{0,\ell}(t) = \frac{n^2(n\ell^2-1)t^2 + 2n\ell(n-1)t - n^2\ell^2 +3n-2}{(n-1)((n\ell+1)^2-(n-1))}.
\end{equation}

\subsection{Interlacing of roots}

The explicit formula \eqref{Ch-D} relates the polynomials 
$Q_i^{0,\ell}(t)$ and $Q_{i+1}(t)$ and provides a tool to derive the interlacing properties of the zeros of $Q_i^{0,\ell}(t)$ with respect to the zeros of its "neighbours"\ $Q_{i+1}(t)$ and $Q_i(i)$.

We denote by
\begin{equation} \label{roots-0ell}
    t_{i,1}^{0,\ell} < t_{i,2}^{0,\ell} < \cdots < t_{i,i}^{0,\ell} 
\end{equation}
the zeros of $Q_i^{0,\ell}(t)$ and recall \eqref{roots-q}. In fact, we are mostly interested in the zeros of the highest degree polynomials in the series. 

\begin{theorem}
\label{roots-of-p0-ell}
Let $\ell$ and $k$ be such that $t_{k+1,1}<\ell<t_{k,1}$ and
$Q_{k+1}(\ell)/Q_{k}(\ell)<1$. Then
the zeros \eqref{roots-0ell} of $Q_i^{0,\ell}(t)$ belong to 
the interval $(\ell,1)$ and the interlacing rules
\begin{equation}\label{Interlacing}\begin{split}
t_{i,j}^{0,\ell} &\in (t_{i,j}, t_{i+1,j+1}), \ i=1,\dots, k-1, j=1,\dots, i ;\\
t_{k,j}^{0,\ell} &\in (t_{k+1,j+1}, t_{k,j+1}), \ j=1,\dots,k-1, \ t_{k,k}^{0,\ell}\in (t_{k+1,k+1},1),
\end{split}
\end{equation}
hold.
\end{theorem}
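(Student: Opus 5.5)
Throughout, write $g_i(t):=Q_{i+1}(t)-\gamma_i Q_i(t)$ with $\gamma_i:=Q_{i+1}(\ell)/Q_i(\ell)$. By the Christoffel--Darboux formula \eqref{Ch-D}, $(t-\ell)Q_i^{0,\ell}(t)$ is a positive multiple of $g_i(t)$. The plan is to locate the zeros of $g_i$ by sign counting against the zeros of $Q_i$ and $Q_{i+1}$, using the standard interlacing $t_{i+1,j}<t_{i,j}<t_{i+1,j+1}$ of consecutive Krawtchouk polynomials. Since $g_i(\ell)=0$ by construction, $\ell$ is one zero of $g_i$, and the remaining $i$ zeros are exactly the zeros of $Q_i^{0,\ell}$. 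It therefore suffices to place these $i$ zeros, showing that none equals $\ell$ and that they lie in the claimed intervals.

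\textbf{Case $i\le k-1$.} Here $\ell<t_{k,1}\le t_{i+1,1}<t_{i,1}$, so $\ell$ lies to the left of all zeros of $Q_i$ and $Q_{i+1}$. The normalization $Q_j(1)=1$ together with positive leading coefficients gives $\operatorname{sign}Q_j(\ell)=(-1)^j$, hence $\gamma_i<0$.
\begin{itemize}
\item At $t=t_{i,j}$ we get $g_i(t_{i,j})=Q_{i+1}(t_{i,j})$, which has sign $(-1)^{i+1-j}$ by interlacing.
\item At $t=t_{i+1,j+1}$ we get $g_i=-\gamma_iQ_i(t_{i+1,j+1})$, which has sign $(-1)^{i-j}$.
\end{itemize}
The signs differ, so $g_i$ has a zero in each of the $i$ intervals $(t_{i,j},t_{i+1,j+1})$, $j=1,\dots,i$. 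Together with $\ell$ this gives all $i+1$ zeros, so these are the zeros of $Q_i^{0,\ell}$, and they lie in $(\ell,1)$.

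\textbf{Case $i=k$.} Now $t_{k+1,1}<\ell<t_{k,1}$, so $Q_k(\ell)$ has sign $(-1)^k$ and $Q_{k+1}(\ell)$ has sign $(-1)^k$. Hence $\gamma_k>0$, and by hypothesis $\gamma_k<1$. The sign pattern is as follows.
\begin{itemize}
\item At $t=t_{k+1,j+1}$ we get $g_k=-\gamma_kQ_k(t_{k+1,j+1})$, which has sign $-(-1)^{k-j}$.
\item At $t=t_{k,j+1}$ we get $g_k=Q_{k+1}(t_{k,j+1})$, which has sign $(-1)^{k-j}$.
\end{itemize}
This gives a zero in each of $(t_{k+1,j+1},t_{k,j+1})$ for $j=1,\dots,k-1$. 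At $t=1$ we have $g_k(1)=1-\gamma_k>0$, while $g_k(t_{k+1,k+1})=-\gamma_kQ_k(t_{k+1,k+1})<0$, so there is a zero in $(t_{k+1,k+1},1)$.

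That accounts for $k$ zeros, all greater than $t_{k+1,2}$, plus the zero at $\ell\in(t_{k+1,1},t_{k,1})$. The only overlap to rule out is $\ell$ coinciding with the first found zero, and this cannot happen because $\ell<t_{k,1}<t_{k+1,2}$. So $g_k$ has exactly these $k+1$ simple zeros, the $k$ nontrivial ones are the zeros of $Q_k^{0,\ell}$, and all of them exceed $\ell$.

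\textbf{Main obstacle.} The delicate step is the last case. It depends on the interlacing hypothesis together with $\gamma_k<1$, which controls the sign of $g_k$ at $t=1$ and produces the final zero in $(t_{k+1,k+1},1)$. A further point is that the multiplier in \eqref{Ch-D}, $(1-\ell)/(1-\gamma_k)$, is positive only because $\gamma_k<1$; without this the normalized polynomial would not be well defined. I would also verify carefully that the Krawtchouk sign conventions, with $Q_j(1)=1$ and roots in $(-1,1)$, give the stated signs.
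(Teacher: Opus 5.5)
Your proof is correct and follows the paper's own route. The paper solves $Q_{i+1}(t)/Q_i(t)=Q_{i+1}(\ell)/Q_i(\ell)$ by tracking this rational function between its poles $t_{i,j}$ and its zeros $t_{i+1,j+1}$, and gets the top root below $1$ from $\gamma_k<1=Q_{k+1}(1)/Q_k(1)$; your sign count for $g_i=Q_{i+1}-\gamma_iQ_i$ is the same argument. Your use of $g_i(\ell)=0$ together with the degree count $i+1$ makes rigorous the paper's informal claim that the located solutions account for all zeros and that none of them equals $\ell$.

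One small correction to your closing remark: normalization fails only when $\gamma_k=1$, since then $T_k(1,\ell)=0$. For $\gamma_k>1$ the polynomial $Q_k^{0,\ell}$ is still defined, just with a negative constant. So the hypothesis $\gamma_k<1$ is needed only for $t_{k,k}^{0,\ell}<1$ and for the sign of that constant, as the paper's remark notes.
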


{\it Proof.} It follows from \eqref{Pi0-ell} and \eqref{Ch-D} that the zeros of $Q_k^{0,\ell}(t)$ are exactly the solutions of the equation
\begin{equation} \label{FracEq}
\frac{Q_{i+1}(t)}{Q_i(t)} = \frac{Q_{i+1}(\ell)}{Q_i(\ell)}.
\end{equation}
We prove \eqref{Interlacing} by analyzing \eqref{FracEq}. There are two essentially different situations: the cases $i<k$ and $i=k$. 

Let $i<k$. From the general properties of orthogonal polynomials (cf. \cite[Chapter 4]{Sze}) the zeros of $Q_{i+1}(t)$ and $Q_i(t)$ are interlaced and belong to the interval $[t_{k,1},t_{k,k}]$. The inequality $\ell <t_{1,k}$ implies that
${\rm sign}\,Q_i(\ell)=(-1)^i$. Therefore, the right hand side of \eqref{FracEq} is equal to a negative constant. The rational function $Q_{i+1}(t)/Q_i(t)$ in the left hand side has simple poles at the zeros $t_{i, j}$, $j=1,\dots, i$, of $Q_i(t)$ and simple zeros at the zeros of $t_{i+1,j}$, $j=1,\dots, i+1$, of $Q_{i+1}(t)$. This means that there is at least one solution $t_{i,j}^{0,\ell}$ of \eqref{FracEq} in every subinterval $(t_{i,j}, t_{i+1,j+1})$, $j=1,\dots,i$, (in particular, $\ell<t_{k,1}<t_{k,1}^{0,\ell}$) which in fact accounts for all zeros of $Q_i^{0,\ell}(t)$ and we are done in this case.

In the case $i=k$ it follows from the inequalities $t_{k+1,1}<\ell<t_{k,1}$ that $Q_{k+1}(\ell)/Q_k(\ell)>0$ and we account similarly to above for the first $k-1$ solutions of \eqref{FracEq}, i.e. we have \[t_{k,j}^{0,\ell} \in (t_{k+1,j+1}, t_{k,j+1}), \ j=1,\dots,k-1.\]
For the largest zero of $Q_{k}^{0,\ell}(t)$ we use that $Q_{k+1}(t)/Q_k(t)>0$ for every $t>t_{k+1,k+1}$ and the last ration tends to infinity as $t$ tends to infinity. Thus, we find at least one more solution of \eqref{FracEq} which is greater than $t_{k+1,k+1}$. This accounts for $t_{k,k}^{0,\ell}$, the last zero of $Q_{k}^{0,\ell}(t)$. Moreover, since $Q_{k+1}(\ell)/Q_{k}(\ell)<1=Q_{k+1}(1)/Q_k(1)$ by assumption, we conclude that $t_{k,k}^{0,\ell}<1$. This completes the proof.
\hfill $\Box$

\begin{remark} We remarks that the condition $Q_{k+1}(\ell)/Q_{k}(\ell)<1$ was used only to ensure $t_{k,k}^{0,\ell}<1$. Without that condition but still with 
$t_{k+1,1}<\ell<t_{k,1}$ the polynomials $Q_i^{0,\ell}$ are well defined and \eqref{Interlacing} follows except for $t_{k,k}^{0,\ell}<1$. 
\end{remark}

A three-term recurrence relation for the polynomials $Q_i^{0,\ell}$, $i \geq 0$, can be derived by standard means, but we do not need it in this paper. Instead, we turn to the quadrature formula which is provided by the zeros of $P_{k}^{0,\ell}(t)$.  

\subsection{A quadrature formula}

We denote by $L_i (t)$, $i=0,1,\ldots,k$, the Lagrange basic polynomials corresponding to the set of nodes 
\begin{equation} \label{nodes-ell}
\ell<t_{k,1}^{0,\ell}<t_{k,2}^{0,\ell}<\cdots<t_{k,k}^{0,\ell},
\end{equation}
where $L_0(t)$ corresponds to $\ell$ (i.e., it has the same zeros as $Q_k^{0,\ell}$) and $L_i(t)$ corresponds to $t_i^{0,\ell}$ for $i=1,\ldots,k$. Further, we denote by
\[ \theta_i:=\int_{-1}^1 L_i(t) d\mu(t), \ \ i=0,1,\dots,k, \]
the weights of our quadrature formula. 

 \begin{theorem} \label{QFtheorem} The quadrature formula
\begin{equation}
\label{QF}
f_0 = \int_{-1}^1 f(t) d \mu(t)
    =\theta_0 f(\ell)+ \sum_{i=1}^{k} \theta_i f(t_{k,i}^{0,\ell})
\end{equation}
is exact for all polynomials of degree at most $2k$ and its  weights $\theta_i$, $i=0,1,\dots, k$, are positive.
\end{theorem}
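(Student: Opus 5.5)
Write $\mu$ for $\mu_n$ and $\omega(t):=(t-\ell)Q_k^{0,\ell}(t)$, a polynomial of degree $k+1$ whose zeros are exactly the $k+1$ nodes \eqref{nodes-ell}; these are distinct and lie in $[\ell,1)$ by Theorem \ref{roots-of-p0-ell}. The plan is the standard Gauss--Radau argument with fixed node $\ell$, using orthogonality with respect to $d\mu_n^{(\ell)}$ in place of ordinary orthogonality.

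\emph{Exactness.} For $\deg f\le k$ the formula \eqref{QF} is exact, because $\{L_i\}_{i=0}^k$ is a basis of $\mathcal P_k$ and the $\theta_i$ are defined as $\int L_i\,d\mu$. For $k<\deg f\le 2k$, divide by $\omega$: $f=q\,\omega+r$ with $\deg q\le k-1$ and $\deg r\le k$. Then $f$ and $r$ agree at all nodes. Moreover
\[ \int q\,\omega\,d\mu=\int q(t)Q_k^{0,\ell}(t)(t-\ell)\,d\mu(t)=c_{n,\ell}^{-1}\langle q,Q_k^{0,\ell}\rangle_{\mu_n^{(\ell)}}=0 , \]
since by Theorem \ref{explicit-q} the polynomial $Q_k^{0,\ell}$ is orthogonal to $\mathcal P_{k-1}$ with respect to $\mu_n^{(\ell)}$. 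This uses orthogonality of the degree-$k$ member to all of $\mathcal P_{k-1}$, which the Christoffel--Darboux argument of Theorem \ref{explicit-q} gives directly, without needing positive definiteness at degree $k$. Hence $\int f\,d\mu=\int r\,d\mu$, and exactness follows from the case $\deg r\le k$.

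\emph{Positivity of $\theta_1,\dots,\theta_k$.} For $i\ge1$, set $g_i(t):=(t-\ell)\bigl(L_i(t)/(t-\ell)\bigr)^2$. The factor $L_i/(t-\ell)$ is a polynomial of degree $k-1$, so $\deg g_i=2k-1$ and exactness applies. The polynomial $g_i$ vanishes at $\ell$ and at every node $t_{k,j}^{0,\ell}$ with $j\neq i$, so
\[ \theta_i\,g_i(t_{k,i}^{0,\ell})=\int g_i\,d\mu=c_{n,\ell}^{-1}\int \bigl(L_i/(t-\ell)\bigr)^2 d\mu_n^{(\ell)}>0 \]
by Lemma \ref{lem_pos_def}. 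Here we need $\ell<t_{k,1}$, which the standing assumptions $t_{k+1,1}<\ell<t_{k,1}$ provide. Since $g_i(t_{k,i}^{0,\ell})>0$ (its factor $t_{k,i}^{0,\ell}-\ell$ is positive), we conclude $\theta_i>0$.

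\emph{Positivity of $\theta_0$, the main obstacle.} For the fixed node, take $f=(Q_k^{0,\ell})^2$, of degree $2k$. Exactness gives $\theta_0\,Q_k^{0,\ell}(\ell)^2=\int (Q_k^{0,\ell})^2\,d\mu$, and the right-hand side is positive because $\mu_n$ is a positive measure and $\deg Q_k^{0,\ell}=k\le n$, so $Q_k^{0,\ell}$ cannot vanish on all of $T_n$. Also $Q_k^{0,\ell}(\ell)\neq0$, since all its zeros exceed $\ell$. Hence $\theta_0>0$.

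The points to check carefully are the following. The degree $k$ must satisfy $2k\le n$, so that the inner product is genuine and $\mu$ does not reduce polynomials modulo $\prod_i(t-t_i)$. The orthogonality of $Q_k^{0,\ell}$ to $\mathcal P_{k-1}$ must hold at degree exactly $k$. The assumption $Q_{k+1}(\ell)/Q_k(\ell)<1$ from Theorem \ref{roots-of-p0-ell} is used only to place the nodes in $[-1,1)$.
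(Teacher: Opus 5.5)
Your proposal is correct and follows essentially the same route as the paper. Exactness comes from the Lagrange basis together with division by $(t-\ell)Q_k^{0,\ell}$ and orthogonality with respect to $\mu_n^{(\ell)}$, and positivity comes from the test polynomials $(t-\ell)\bigl(L_i/(t-\ell)\bigr)^2$ (a constant multiple of the paper's $(t-\ell)u_i^2$) and $(Q_k^{0,\ell})^2$. Your added justifications fill in steps the paper leaves implicit: that $\int (Q_k^{0,\ell})^2\,d\mu>0$ because $Q_k^{0,\ell}$ cannot vanish on all of $T_n$, that $Q_k^{0,\ell}(\ell)\neq 0$, and that degree-$k$ orthogonality comes from the Christoffel--Darboux identity rather than from positive definiteness.
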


{\it Proof.}  We argue as in Lemma \ref{lem_pos_def}. The formula \eqref{QF} is exact for the Lagrange basis at $k+1$ nodes
from \eqref{nodes-ell} and hence for all polynomials of degree at most $k$. 

Given a polynomial $f(t)$ of degree $k+1 \leq \deg(f) \leq 2k$, we write it as
\begin{equation} \label{div}
f(t)=(t-\ell)Q_{k}^{0,\ell} (t)q(t)+r(t),
\end{equation}
where the quotient $q(t)$ has degree at most $k-1$ and the remainder $r(t)$ has degree at most $k$.
We again integrate over $[-1,1]$ with respect to $\mu(t)$ and use the orthogonality of $P_{k}^{0,\ell}(t)$ to all polynomials of degree at
most $k-1$ with respect to $d\mu_{\ell}(t)=c_{n,\ell} (t-\ell)d\mu(t)$ to see that
\[ f_0=r_0=\theta_0 r(\ell)+ \sum_{i=1}^{k} \theta_i r(t_{k,i}^{0,\ell})=\theta_0 f(\ell)+ \sum_{i=1}^{k} \theta_i f(t_{k,i}^{0,\ell}) \]
by \eqref{div} (here $r_0$ is the zeroth Krawtchouk coefficient of $r$). Therefore, \eqref{QF} holds for $f$ which completes the proof of the exactness.

Next, we show the positivity of the weights $\theta_i$, $i=0,\dots, k$, by using suitable polynomials in \eqref{QF}. First, we fix $i \in \{1,2,\ldots,k\}$ and apply $f(t)=(t-\ell)\left( u_{i} (t)\right)^2$ in \eqref{QF}, where
\[ u_{i}(t)=\frac{P_{k}^{0,\ell}(t)}{t-t_{k,i}^{0,\ell}}.  \]
Since $\deg(u_i)=k-1$, we have $\deg(f)=2k-1$ and the formula \eqref{QF}) is exact for $f$. This gives
\[
f_0=\theta_i (t_{k,i}^{0,\ell}-\ell) \left(u_{i} (t_{k,i}^{0,\ell})\right)^2,
\]
whence it follows that the sign of $\theta_i$ is the same as the sign of $f_0$. Now, from
\[ f_0=\int_{-1}^1 (t-\ell)\left( u_{i} (t)\right)^2 d \mu(t)= \frac{1}{c_{n,\ell}} \int \left( u_{i} (t)\right)^2 d \mu_\ell(t) >0,\]
where we use the fact that $\mu_\ell(t)$ is positive definite up to degree $k-1=\deg(u_i)$, we conclude that $\theta_i>0$, $i=1,\dots, k$.
For $\theta_0$, we use $f(t)=\left( P_k^{0,\ell}(t)\right)^2$ of degree $2k$ in \eqref{QF} to obtain $\theta_0 f(\ell)=f_0$ to see that the
sign of $\theta_0$ is the same as the sign of $f_0$. Obviously $f_0>0$ and this completes the proof. \hfill $\Box$

\section{Improving Fazekas-Levenshtein bounds}

\subsection{Rao bound}

Denote by $B(n,\tau)$ the mimimum cardinality $M$ of an OA$(M,n,q,\tau)$ for fixed length $n$, alphabet size $q$, and strength $\tau$.
The following bound was first proved by Rao \cite{Rao} in 1947:
\begin{equation}
\label{R-bound}
B(n,\tau) \geq R(n,\tau) := q^{1-\varepsilon} \sum_{i=0}^{k-1+\varepsilon} {n-1+\varepsilon \choose i} (q-1)^i,
\end{equation}
where $\tau = 2k-1+\varepsilon$, $\varepsilon \in \{0,1\}$ just indicates the parity of $\tau$. For $q=2$, \eqref{R-bound} becomes
\begin{equation}
\label{R-bound-2}
B(n,\tau) \geq R(n,\tau) := 2^{1-\varepsilon} \sum_{i=0}^{k-1+\varepsilon} {n-1+\varepsilon \choose i}.
\end{equation}
The first few bounds \eqref{R-bound-2} are 
\[ R(n,1)=2, \ R(n,2)=n+1, \ R(n,3)=2n, \ R(n,4)=\frac{n^2+n+2} {2}, \] 
\[ R(n,5)=n^2-n+2, R(n,6)=\frac{(n+1)(n^2-n+6)}{6} \]
(see \cite{Del-dis,Del73,Lev98}). The OAs attaining the Rao bound are called tight and exist rarely (see, e,g,, \cite{Ban21,GSV,GS26,MK94,N79}). We are interested in OAs of strength $2k$ which have cardinality slightly larger than $R(n,2k)$.  

\subsection{Notations for the structure of OAs}

We are interested in the structure of OAs whose cardinality is close to the Rao bound. More precisely,
we are interested in linear programming bounds which follow from using Theorem \ref{th:pi-f} with suitable polynomials. 
Apart from the interest here, we believe that our information could be useful for proving nonexistence
results (see, e.g., \cite{BMS}).

Let $C \subset F_2^n$ be an OA. For any point $y \in F_2^n$, we define the (multi)set
\[ I(y) = \{ \langle x,y \rangle : x \in C\} = \{ t_1(y),t_2(y),\ldots,t_{|C|}(y) \}, \]
where we order $I(y)$ by \[ -1 \leq t_1(y) \leq t_2(y) \leq \cdots \leq t_{|C|}(y) \leq 1 \]
(note that $t_{|C|}(y)=1 \iff y \in C$).

Then \eqref{dd-sys} can be written as
\begin{equation} \label{def-des-point-wise}
\sum_{j=1}^{|C|} f(t_j(y))=f_0|C|
\end{equation}
to hold for every point $y \in F_2^n$, every orthogonal array $C \subset F_2^n$ of strength $\tau$, and every real polynomial $f$ (with its Krawtchouk expansion \eqref{kr-exp}) of degree at most $\tau$. 

Denote by 
\[ D(C):=\{ y \in F_2^n : t_{|C|}(y)=\rho(C) \} \]
the set of points $y \in F_2^n$, where the covering radius is realized. 

Denote by $\overline{C}$ the set of points which are antipodal to points of $C$, i.e.
\[ \overline{C}:=\{ u\in F_2^n : \exists x \in C \mbox{ such that } d(x,u)=n \iff t_1(x)=\langle x,u \rangle =-1\}. \]
Note that the set of antipodal points is never empty and if $u \in \overline{C}$, then its distance $d(u,C)$ to $C$ is strictly less than $n$ whenever $|C|>1$. Note also that $C=\overline{C}$ if and only if $C$ is antipodal. 

\subsection{Binary OAs of strength $2k$}

The parameter $k$ comes henceforth from $\tau=2k$, the strength of the orthogonal arrays under consideration. 
Let $C \subset F_2^n$ be a binary OA of strength $2k$ and cardinality $|C|>R(n,2k)$ 
and let $y \in F_2^n$ is a point 
which realizes the covering radius of $C$, i.e. $y \in D(C)$. 
 
The tight binary OAs of strength $2k$ have $t_1(y)=-1$ for every $y \in D(C)$ \cite[Theorem 3]{FL}. In other words, $-y \in \overline{C}$ for every $y \in D(C)$ for the tight binary OAs of strength $2k$. 

Therefore, it is natural to investigate how close is $t_1(y)$ to $-1$ (for a point $y \in D(C)$) at least for cardinalities which are close to the Rao bound $R(n,2k)$.
For chosen $\ell$, which is close to $-1$, we consider two cases: $t_1(y) \in [-1,\ell]$ for all $y \in D(C)$ and $t_1(y) \geq \ell$ for some $y \in D(C)$.

\subsubsection{Case $-1 < \ell \leq t_1(y)$ for some $y \in D(C)$}

We start with a general observation that is, in fact, a modification of Theorem \ref{th:tie} for this case. We need to adjust only condition (A1). 

\begin{theorem}
\label{th:tie-modified} 	
Let $\tau$ be a positive integer, $\ell \in (-1,0)$, $s \in [\ell,1)$, and the polynomial $f \in \mathbb{R}[t]$ satisfy 

{\rm (A1)$^\prime$} $f(t) \leq 0 \ \forall t \in [\ell,s]$;

{\rm (A2)} $\deg(f) \leq \tau$;

{\rm (A3)} $f_0 > 0$ in the Krawtchouk expansion $f(t)=\sum_{i=0}^{\tau} f_iQ_i(t)$.

\noindent
Assume that $C$ is a binary $OA(n,M,\tau)$ such that there exists $y \in D(C)$ with $\ell \leq t_1(y)$. Then the covering radius of $C$ satisfies $\rho(C) \geq s$.
\end{theorem}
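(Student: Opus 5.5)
The argument is a contradiction argument mirroring the proof of Theorem \ref{th:tie}. The only new point is to use the hypothesis $\ell \le t_1(y)$ so that condition (A1) can be weakened to (A1)$^\prime$. Fix a point $y \in D(C)$ with $\ell \le t_1(y)$, which exists by assumption. Suppose, contrary to the claim, that $\rho(C) < s$. Since $y \in D(C)$, we have $t_{|C|}(y) = \rho(C) < s$. The multiset $I(y)$ is ordered increasingly, so every element satisfies
\[ \ell \le t_1(y) \le t_j(y) \le t_{|C|}(y) = \rho(C) < s, \qquad j=1,\dots,|C|. \]
Thus all inner products $\langle x,y\rangle$, $x \in C$, lie in $[\ell,s]$.

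Next apply the pointwise form \eqref{def-des-point-wise} of Theorem \ref{th:pi-f} to the point $y$ and the polynomial $f$. This is legitimate by (A2), since $\deg f \le \tau$ and $C$ has strength $\tau$. It gives
\[ \sum_{j=1}^{|C|} f(t_j(y)) = f_0 |C|. \]
By (A1)$^\prime$ and the inclusion above, each term on the left is $\le 0$, so the left-hand side is $\le 0$. By (A3) and $|C| = M \ge 1$, the right-hand side is strictly positive. This is a contradiction, hence $\rho(C) \ge s$.

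There is no real obstacle. The only step needing care is checking that all the $t_j(y)$ lie in $[\ell,s]$ rather than merely in $[-1,s]$. The lower end comes from the ordering of $I(y)$ together with the choice of $y$. The upper end comes from $y$ realizing the covering radius, $t_{|C|}(y)=\rho(C)$. We also need the strict inequality $\rho(C)<s$ (or even $\le s$) to place the $t_j(y)$ in the closed interval where $f \le 0$. Note that the hypothesis $s \ge \ell$ is only needed to make the interval $[\ell,s]$ nonempty, so that the statement is meaningful.
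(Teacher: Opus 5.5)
Your proof is correct and follows the paper's argument: fix $y \in D(C)$ with $t_1(y) \ge \ell$ and assume $\rho(C) < s$. Then every $t_j(y)$ lies in $[\ell,s]$, so the left-hand side of \eqref{def-des-point-wise} is non-positive while $f_0|C|>0$, a contradiction. Your side remark that the same contradiction already arises under $\rho(C) \le s$ is also right.
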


\begin{proof}
 Let $C$ be a binary $OA(n,M,\tau)$ and the point $y \in D(C)$ be such that $t_1(y) \geq \ell$. Let $f$ satisfy the conditions (A1)$^\prime$, (A2), and (A3).
 Suppose for a contradiction that $\rho(C) < s$. Applying \eqref{dd-sys} (or, equivalently, \eqref{def-des-point-wise}) with $f$, $C$, and $y$, we obtain that the left-hand side is non-positive due to condition (A1)$^\prime$ and inequality $\rho(C) < s$. On the other hand, the right-hand side $f_0|C|$ is positive due to (A3). This contradiction completes the proof. 
\end{proof}

We are now in a position to improve the Fazekas-Levenshtein bound in the case $-1<\ell \leq t_1(y)$. We apply Theorem \ref{th:tie-modified} with suitable polynomial and use the quadrature formula \eqref{QF}. 

\begin{theorem} \label{impr_fl}
Let $C \subset F_2^n$ be an OA of strength $2k$. Suppose that $\ell$ is such that $t_{k+1,1}<\ell<t_{k,1}$,
$Q_{k+1}(\ell)/Q_{k}(\ell)<1$, and $-1<\ell \leq t_1(y)$
for some $y \in D(C)$. Then 
\begin{equation} \label{main-bound}
    \rho(C) \geq t_{k,k}^{0,\ell}.
\end{equation}
\end{theorem}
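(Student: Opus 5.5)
We will apply Theorem \ref{th:tie-modified} with $\tau=2k$ and $s=t_{k,k}^{0,\ell}$, using the polynomial
\[ f(t) = (t-\ell)\,\bigl(Q_k^{0,\ell}(t)\bigr)^2 \cdot \frac{t - t_{k,k}^{0,\ell}}{(t-t_{k,k}^{0,\ell})^2}\cdot (t-t_{k,k}^{0,\ell}) \]
or, more transparently, $f(t)=(t-\ell)(t-t_{k,k}^{0,\ell})\,u_k(t)^2$, where $u_k(t)=Q_k^{0,\ell}(t)/(t-t_{k,k}^{0,\ell})$ is the polynomial of degree $k-1$ whose zeros are $t_{k,1}^{0,\ell},\dots,t_{k,k-1}^{0,\ell}$. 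Then $\deg f = 2k$, so (A2) holds, and by Theorem \ref{roots-of-p0-ell} all zeros of $Q_k^{0,\ell}$ lie in $(\ell,1)$ under the stated hypotheses, so $t_{k,k}^{0,\ell}\in(\ell,1)$ and $s=t_{k,k}^{0,\ell}\in[\ell,1)$ as Theorem \ref{th:tie-modified} requires.

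For (A1)$^\prime$: on $[\ell, t_{k,k}^{0,\ell}]$ we have $t-\ell\ge 0$, $t-t_{k,k}^{0,\ell}\le 0$ and $u_k(t)^2\ge 0$, so $f(t)\le 0$ there. For (A3) we compute $f_0$ via the quadrature formula \eqref{QF} of Theorem \ref{QFtheorem}, which is exact for degree $\le 2k$. Among the nodes $\ell, t_{k,1}^{0,\ell},\dots,t_{k,k}^{0,\ell}$, the polynomial $f$ vanishes at $\ell$ (factor $t-\ell$), at $t_{k,k}^{0,\ell}$, and at $t_{k,i}^{0,\ell}$ for $i\le k-1$ (zeros of $u_k$). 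Hence \eqref{QF} gives $f_0=0$, which is not strict. So this choice must be perturbed: the natural fix is to drop one factor so that some node survives with a positive value, or equivalently to use the exactness computation directly. The cleaner route: take instead $f(t)=(t-\ell)(t-t_{k,k}^{0,\ell})u_k(t)^2$ and argue $f_0=0$ forces, for a point $y\in D(C)$ with $t_1(y)\ge\ell$ and $\rho(C)<s$, that $\sum_j f(t_j(y))=0$ while every term is $\le 0$, so every $t_j(y)$ is a zero of $f$, i.e. lies in $\{\ell,t_{k,1}^{0,\ell},\dots,t_{k,k}^{0,\ell}\}$; combined with $\rho(C)<t_{k,k}^{0,\ell}$ this leaves only the nodes $\ell,t_{k,1}^{0,\ell},\dots,t_{k,k-1}^{0,\ell}$. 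Then apply \eqref{def-des-point-wise} with the degree-$2k$ polynomial $g(t)=(Q_k^{0,\ell}(t))^2$ (or $(t-\ell)u_k(t)^2$, of degree $2k-1$): its values on these remaining nodes are zero except possibly at $\ell$, while by \eqref{QF} its $g_0$ equals $\theta_k g(t_{k,k}^{0,\ell})+\dots$, strictly positive since the weights are positive. Choosing $g=(t-\ell)u_k^2$, $g$ vanishes at all remaining nodes, so the left side of \eqref{def-des-point-wise} is $0$ while $g_0|C|=\theta_k (t_{k,k}^{0,\ell}-\ell)u_k(t_{k,k}^{0,\ell})^2|C|>0$ — a contradiction. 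Hence $\rho(C)\ge t_{k,k}^{0,\ell}$.

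The main obstacle is precisely this degeneracy $f_0=0$: Theorem \ref{th:tie-modified} as stated needs $f_0>0$, so one either runs the two-polynomial contradiction argument above (equality forcing the inner-product multiset onto the nodes, then a second test polynomial), or notes that a slight perturbation $s'<t_{k,k}^{0,\ell}$ with $f$ built from a nearby node set gives $f_0>0$ and lets $s'\to t_{k,k}^{0,\ell}$, using that $\rho(C)$ takes values in the finite set $T_n$. I would present the first argument, as it only uses \eqref{QF} and the positivity of $\theta_k$ from Theorem \ref{QFtheorem}.
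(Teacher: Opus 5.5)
Your final argument is correct; it differs from the paper's only in how it handles the degeneracy $f_0=0$.

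\textbf{The paper's route.} It applies Theorem \ref{th:tie-modified} with the perturbed polynomial $(t-\ell)(t-t_{k,k}^{0,\ell}+\varepsilon)\prod_{i=1}^{k-1}(t-t_{k,i}^{0,\ell})^2$ and $s=t_{k,k}^{0,\ell}-\varepsilon$. From \eqref{QF} it reads off $f_0=\varepsilon\theta_k(t_{k,k}^{0,\ell}-\ell)\prod_{i<k}(t_{k,k}^{0,\ell}-t_{k,i}^{0,\ell})^2>0$, and then lets $\varepsilon\to 0$.

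\textbf{Relation to yours.} Up to a positive constant, the paper's polynomial is exactly $f+\varepsilon g$, with your $f=(t-\ell)(t-t_{k,k}^{0,\ell})u_k^2$ and $g=(t-\ell)u_k^2$. So both proofs rest on the same two facts: $f_0=0$ and $g_0=\theta_k g(t_{k,k}^{0,\ell})>0$. Both facts also follow directly from the orthogonality of $Q_k^{0,\ell}$ to $u_k$ with respect to $d\mu_n^{(\ell)}$ and from Lemma \ref{lem_pos_def}, without the full quadrature formula.

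\textbf{What each route buys.} The paper merges the two facts into a single admissible test function. This keeps it inside the standard LP framework, at the price of a limiting step. You use them in sequence: the equality case for $f$ forces every $t_j(y)$ onto $\{\ell,t_{k,1}^{0,\ell},\dots,t_{k,k-1}^{0,\ell}\}$, and then $g$ gives $0=g_0|C|>0$. This needs no limit. Run with $\rho(C)\le t_{k,k}^{0,\ell}$, it also describes the configurations attaining the bound: all $t_j(y)$ lie in the node set \eqref{nodes-ell}.

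\textbf{Minor points.}
\begin{itemize}
\item Your first displayed formula for $f$ simplifies to $(t-\ell)(Q_k^{0,\ell})^2$, which has degree $2k+1$; it should be dropped.
\item The tentative choice $g=(Q_k^{0,\ell})^2$ gives no contradiction, because it is nonzero at $\ell$. Only $g=(t-\ell)u_k^2$ works.
\item The claim $u_k(t_{k,k}^{0,\ell})\neq 0$ relies on the zeros of $Q_k^{0,\ell}$ being simple, which Theorem \ref{roots-of-p0-ell} provides.
\item The perturbation alternative you mention does not need $\rho(C)\in T_n$.
\end{itemize}
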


\begin{proof}
We apply Theorem \ref{th:tie-modified} with the polynomial \[ f(t)=(t-\ell)(t-t_{k,k}^{0,\ell}+\varepsilon) \prod_{i=1}^{k-1} (t-t_{k,i}^{0,\ell})^2, \]
where $\varepsilon>0$ is a small positive number. It is easy to see that the conditions of Theorem \ref{th:tie-modified} are satisfied for $f$ with $s=t_{k,k}^{0,\ell}-\varepsilon$. Indeed, $f(t) \leq 0$ for every $t \in [\ell,t_{k,k}^{0,\ell}-\varepsilon]$ for (A1)$^\prime$, $\deg (f)=2k=\tau$ for (A2), and for (A3) by the quadrature formula \eqref{QF} for $f$ we compute
\begin{eqnarray*} 
f_0 &=& \theta_0 f(\ell)+ \sum_{i=1}^{k} \theta_i f(t_{k,i}^{0,\ell})= \theta_kf(t_{k,k}^{0,\ell}) \\
&=& \varepsilon \theta_k (t_{k,k}^{0,\ell}-\ell)  \prod_{i=1}^{k-1} (t_{k,k}^{0,\ell}-t_{k,i}^{0,\ell})^2>0. \end{eqnarray*}
Therefore, Theorem \ref{th:tie-modified} implies that  $\rho(C) \geq t_{k,k}^{0,\ell}-\varepsilon$. Since this is true for every small enough $\varepsilon>0$, we conclude that $\rho(C) \geq t_{k,k}^{0,\ell}$ as required. 
\end{proof}

If the boundary case $\ell=-1$ is allowed in Theorem \ref{impr_fl}, we will obtain the Fazekas-Levenshtein bound $\rho(C) \geq t_k^{0,1}$. In other words, in our notations the Fazekas-Levenshtein bound is $\rho(C) \geq t_k^{0,-1}$.
Therefore, we have improved that bound whenever there exists an $\ell$ and a point $y \in D(C)$ such that $-1<\ell \leq t_1(y)$. This also follows from more general results on zeros interlacing from \cite[Chapter 5.3]{Lev98}; we have 
\[ t_{k,k}^{0,\ell} \geq t_{k+1,k+1} >t_k^{0,1} \]
from our Theorem \ref{roots-of-p0-ell} and Lemma 5.30 from \cite{Lev98}. 

Reformulated in terms of distances, the bound \eqref{main-bound} says
\[ R(C) \leq d_k^{0,\ell}:=\frac{n(1-t_{k,k}^{0,\ell})}{2}. \]

\begin{example}
   Utilizing the explicit formulas \eqref{Ex3} we derive the first two (i.e., for $k=1$ and 2) bounds from Theorem \ref{impr_fl} as follows:
\[ \rho(C) \geq t_{1,1}^{0,\ell} = - \frac{1}{n\ell}>t_1^{0,1}=\frac{1}{n}, \]
for $\tau=2$ and all $\ell \in (-1,t_{1,1}=0)$, and 
\[ \rho(C) \geq t_{2,2}^{0,\ell} = \frac{-(n-1)\ell+\sqrt{n(n\ell^2-1)(n\ell^2-2) +\ell^2 +n-2}}{n(n\ell^2-1)}>t_2^{0,1}=\frac{\sqrt{n-1}+1}{n},  \]
for $\tau=4$ and for all $\ell \in (-1, t_{2,1}=-1/\sqrt{n})$. The condition $Q_{k+1}(\ell)/Q_{k}(\ell)<1$ for $k=1$ and 2 holds for $\ell<-1/n$ and $\ell<-1/\sqrt{n}$, respectively. Indeed, we have to consider
\[ \frac{Q_2(\ell)}{Q_1(\ell)}=\frac{n\ell^2-1}{(n-1)\ell}<1, \ \frac{Q_3(\ell)}{Q_2(\ell)}=\frac{(n\ell^2-3n+2)\ell}{(n-2)(n\ell^2-1)}<1 \]
and straightforward calculations give the desired. 

In terms of distances and the same conditions for $\ell$, we have 
\[ R(C) \leq d_1^{0,\ell} = \frac{n\ell + 1}{2\ell}<d_1^{0,1}=\frac{n-1}{2}, \ \]
for $\tau=2$, and 
\[ R(C) \leq d_2^{0,\ell} = \frac{n}{2}+\frac{(n-1)\ell-\sqrt{n(n\ell^2-1)(n\ell^2-2) +\ell^2 +n-2}}{2(n\ell^2-1)}< d_2^{0,1} = \frac{n-1-\sqrt{n-1}}{2},  \]
for $\tau=4$.
%All inequalities $t_{i,i}^{0,\ell}>t_i^{0,1}$, $i=1,2$, and $d_i^{0,\ell}<d_i^{0,1}$, $i=1,2$, respectively, hold (but are not necessarily valid bounds) for all $\ell \in (-1,0)$. \textcolor{blue}{Is this true for $k=2$? Maya said No, 29.8.25).} 
\end{example}

In practice, the bound \eqref{main-bound} should be computed as follows. Given $n$, we assume that the Krawtchouk polynomials 
$\{Q_i(t)\}_{i=0}^n$ are constructed and their roots are collected in a database. Next, for given $k$, we choose $\ell=t_{k,1}-\mu$ for some small $\mu>0$. Then we construct the family $\{Q_i^{0,\ell}(t)\}_{i=0}^k$ or directly compute the polynomial $Q_k^{0,\ell}(t)$ from the formula \eqref{Pi0-ell}. Its largest root is the desired bound.

\subsection{Case $t_1(y)=-1$ for every $y \in D(C)$, i.e. $D(C) \subseteq \overline{C}$}

\subsubsection{A general bound}
We apply another modification of the linear programming method that utilizes the equalities $t_1(y)=-1$ and $t_{|C|}(y)=\rho(C)$ for $y \in D(C)$. 

\begin{theorem} \label{case2-thm}
Let $C \subset F_2^n$ be an OA of strength $2k$ such that $D(C) \subseteq \overline{C}$. Let 
\[ f(t)=\left(t+1-\frac{2}{n}\right)(t-\rho(C))A^2(t), \]
where $A(t)$ is a polynomial of degree $k-1$. Then, with $y \in D(C)$, we have $f_0|C|-f(-1) \leq 0$ and, consequently, 
\begin{equation} \label{q}
    \rho(C) \geq q_k(n,|C|),
\end{equation}
where the last two inequalities are equivalent. 
\end{theorem}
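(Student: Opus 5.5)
We apply the pointwise identity \eqref{def-des-point-wise} at a point $y \in D(C)$ with the given polynomial $f$ and read off the sign of each term. Since $\deg f = 2 + 2(k-1) = 2k = \tau$, Theorem \ref{th:pi-f} applies, so $\sum_{j=1}^{|C|} f(t_j(y)) = f_0|C|$. Because $y \in D(C) \subseteq \overline{C}$, we have $t_1(y) = -1$, and because $y \in D(C)$, also $t_{|C|}(y) = \rho(C)$. We separate the term with $j=1$ and rewrite the identity as
\[ f_0|C| - f(-1) = \sum_{j=2}^{|C|} f(t_j(y)). \]
(If several codewords are antipodal to $y$, which is impossible since antipodes are unique in $F_2^n$, only one term equals $f(-1)$. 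For $j \geq 2$ we have $t_j(y) > -1$.)

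Next we check that every term on the right is non-positive. For $j \geq 2$, $t_j(y) \in T_n$ and $t_j(y) > -1$, so $t_j(y) \geq -1 + 2/n$, which gives $t_j(y) + 1 - 2/n \geq 0$. Moreover $t_j(y) \leq t_{|C|}(y) = \rho(C)$, so $t_j(y) - \rho(C) \leq 0$. Since $A^2 \geq 0$, each $f(t_j(y)) \leq 0$. This is the only place where the discreteness of $T_n$ matters. The factor $t+1-2/n$ is chosen precisely so that it vanishes at the second-smallest node, and the argument therefore works for every $A$ of degree $k-1$. Consequently $f_0|C| - f(-1) \leq 0$, which is the first claim.

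For the bound \eqref{q}, we write out the dependence on $\rho = \rho(C)$. The polynomial $f$ is affine in $\rho$: $f(t) = (t+1-2/n)\,t\,A^2(t) - \rho\,(t+1-2/n)A^2(t)$. Hence both $f_0$ and $f(-1) = (-2/n)(-1-\rho)A^2(-1) = \frac{2}{n}(1+\rho)A^2(-1)$ are affine in $\rho$. Writing $g(t) = (t+1-2/n)A^2(t)$, the inequality becomes
\[ |C|\bigl((tg)_0 - \rho\, g_0\bigr) - \tfrac{2}{n}(1+\rho)A^2(-1) \leq 0, \]
that is,
\[ \rho\bigl(|C| g_0 + \tfrac{2}{n}A^2(-1)\bigr) \geq |C|(tg)_0 - \tfrac{2}{n}A^2(-1). \]
We define $q_k(n,|C|)$ as the resulting quotient, maximized (or at least optimized) over $A$. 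This is legitimate provided the coefficient $|C|g_0 + \frac{2}{n}A^2(-1)$ is positive. Under that positivity the two inequalities are equivalent, because dividing by a positive number preserves the direction.

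The main obstacle is ensuring this positivity, i.e. $g_0 > 0$. The polynomial $g$ has degree $2k-1$, and $g_0 = \int (t-\ell_0)A^2\,d\mu_n$ with $\ell_0 = -1+2/n$. This is exactly the situation of Lemma \ref{lem_pos_def} with $\ell = -1+2/n$, which requires $-1+2/n < t_{k,1}$. That condition holds whenever $Q_k$ has no zero in $[-1,-1+2/n]$, true for $k < n$ or so by the interlacing of Krawtchouk zeros with the nodes of $T_n$. I would invoke the quadrature \eqref{v_quadrature} to get $g_0 = \sum_i \rho_i (t_{k,i} - \ell_0) A^2(t_{k,i}) > 0$ unless $A$ vanishes at all the $t_{k,i}$, which cannot happen for $\deg A = k-1$ and $A \not\equiv 0$.
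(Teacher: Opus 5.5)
Your proof is correct and follows the paper's argument: apply \eqref{def-des-point-wise} at $y\in D(C)$, isolate the single term $f(-1)$, and use $-1+2/n\le t_j(y)\le\rho(C)$ for $j\ge 2$. You also make the affine-in-$\rho(C)$ solving step explicit and check that the coefficient of $\rho(C)$ is positive, which the paper leaves implicit.

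One correction to that positivity check: $-1+2/n<t_{k,1}$ holds exactly when $n>2k$, because $Q_k(-1+2/n)=(-1)^k(1-2k/n)$. It does not hold for all $k<n$. The condition $n>2k$ is automatic for a nontrivial array of strength $2k$. In the degenerate case $n=2k$, the denominator $|C|g_0+\frac{2}{n}A^2(-1)$ is still positive, since $g_0=0$ forces $A(-1)\neq 0$.
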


\begin{proof}
We consider \eqref{def-des-point-wise} for $f$, $C$ and a point $y \in D(C)$. Since $f(t_i(y)) \leq 0$ for $2 \leq i \leq |C|$, we conclude that 
\begin{equation} \label{t1=-1}
    f_0|C|-f(-1)= \sum_{i=2}^{|C|} f(t_i(y)) \leq 0,
\end{equation}
whence the inequality \eqref{q} for $\rho(C)$ is derived by solving the inequality $f_0|C|-f(-1) \leq 0$ with respect to $\rho(C)$. 
\end{proof}

\begin{remark}
Note that the bound \eqref{q} depends on the cardinality of the OA while both the bound \eqref{main-bound} and the Fazekas-Levenshtein bound depend only on the strength. When the polynomial $A$ varies, $q_k(n,|C|)$ becomes a functional of $A$.  
\end{remark}

One needs to maximize the functional $q_k(n,|C|)$  over the polynomial $A(t)$ in Theorem \ref{case2-thm} in order to get the best bound \eqref{q}. This seems to be easy numerically but it is quite difficult to find analytical expressions for $k>1$. 

\begin{example}
    For $k=1$. we have $A(t)=1$, $f(-1)=2(1+\rho(C))/n$, and $f_0=1/n-\rho(C)(1-2/n)$. Solving $f_0|C|-f(-1) \leq 0$, we obtain
\begin{equation} \label{k=1t=-1}
\rho(C) \geq q_1(n,|C|)=\frac{|C|-2}{(n-2)|C|+2}.
\end{equation}
Note that for $|C|=n+1=R(n,2)$, the case of tight orthogonal arrays of strength 2, the bound \eqref{k=1t=-1} coincides with the Fazekas-Levenshtein bound $\rho(C) \geq 1/n$. It is attained whenever a Hadamard matrix of order $n+1$ exists (cf. Theorem 7.3 in \cite{HSS}). For $|C|>n+1$, \eqref{k=1t=-1} is better than the Fazekas-Levenshtein bound. \end{example}

\begin{example} 
For $k=2$ we have to consider 
\[ f(t)=\left(t+1-\frac{2}{n}\right)(t-\rho(C))(t-a)^2, \]
where $a$ will be optimized for the best results via \eqref{t1=-1}. 
Since 
\[ f_0=\frac{1}{n^3}\left(n^2a^2-2n(n-2)a+3n-2-\rho(C)\left(n^2(n-2)a^2-2n^2a+n(n-2)\right)\right), \]
we derive from \eqref{t1=-1} that
\[ \rho(C) \geq q_2(n,|C|)=\frac{|C|(n^2a^2-2n(n-2)a+3n-2)-2n^2(1+a)^2}{n|C|(n(n-2)a^2-2na+n-2)+2n^2(1+a)^2} \]
for $n \geq 4$ (it is easy to check that the denominator is positive for $n \geq 4$). For $n=5$ and $|C|=16$ (this is the even-weight code of length 5 that is the only tight orthogonal array of strength 4, cf. \cite[Theorem 1.2]{GSV}) we obtain
\[ q_2(5,16)=\frac{175a^2-290a+79}{5(125a^2-70a+29)}, \]
which is maximized for $a=-1/5$, giving the bound $\rho(C) \geq 3/5$ (in distances, $R(C) \leq 1$). This coincides with the Fazekas-Levenshtein bound and is attained by the said code. 

In general, differentiating $q_2(n,|C|)$ in $a$ (assuming $n \geq 4$) we obtain the quadratic equation
\[ \left((n-4)|C|+2(n+2)\right)na^2-2(n-2)(|C|-n-1)a-(n-6)|C|-6n-4=0, \]
which gives two stationary points (of the right-hand side).
The case $k=2$ is already quite technical but numerical calculations are straightforward.
\end{example}

\subsubsection{A relationship between the covering radius and the minimum distance}

Using again linear programming techniques, we can relate the covering radius and minimum distance of $2k$-designs in the case $D(C) \subseteq \overline{C}$. This will provide better bounds, but the disadvantage is that one needs to have information about the minimum distance of the orthogonal arrays under consideration. 

Let $C \subset F_2^n$ be such that $D(C) \subseteq \overline{C}$, $y \in D(C)$, and let $d=d(C)$ be the minimum distance of $C$.
Let $t_1(y)=\langle y,x_1 \rangle = -1$; i.e., $x_1 \in C$ is such that $d(y,x_1)=n$. Let $x_2 \in C$ be such that $t_2(y)=\langle y,x_2 \rangle$. Since $y$ and $x_1$ differ in all $n$ positions, and $x_1$ and $x_2$ differ in at least $d$ positions, $y$ and $x_2$ coincide in at least these $d$ positions. This means that the distance between $y$ and $x_2$ is at most $n-d$; i.e., 
\[ t_2(y)=\langle y,x_2 \rangle = 1-\frac{2d(y,x_2)}{n} \geq 1-\frac{2(n-d)}{n}=-1+\frac{2d}{n}=t_d. \]

Now, we can consider polynomials 
\begin{equation} \label{rho-d-poly}
f(t)=\left(t+1-\frac{2d}{n}\right)(t-\rho(C))A^2(t),
\end{equation}
where $A(t)$ is a polynomial of degree $k-1$ as above. Utilizing \eqref{def-des-point-wise} for $f$, $C$, and $y$, we obtain
\begin{equation} \label{rho-d-relating-ineq}
f_0|C|-f(-1)= \sum_{i=2}^{|C|} f(t_i(y)) \leq 0,
\end{equation}
where the inequality follows from the fact that $-1+2d/n \leq t_i(y) \leq \rho(C)$ for all $i \in \{2,3,\ldots,|C|\}$. 
This will produce an inequality relating the covering radius $\rho(C)$, minimum distance $d$, and cardinality of $C$.

\begin{example}
In the case $k=1$, a simple calculation with $A(t)=1$ in \eqref{rho-d-poly} (solving $f_0|C|-f(-1) \leq 0$ for $\rho(C)$) gives the bound
\[ \rho(C) \geq \frac{|C|-2d}{(n-2d)|C|+2d}, \]
relating the covering radius $\rho(C)$ and the minimum distance $d$ of any OA$(|C|,n,2)$. 
For $d=1$ this coincides, of course, with \eqref{k=1t=-1}. 
\end{example}

\section{Combining the cases}

In this section, we combine the bounds \eqref{main-bound} and \eqref{q} we obtain a universal upper bound on $\rho(C)$. The universality is in the sense of Levenshtein (cf. \cite[Introduction]{Lev98}) -- there is a branch of the bound for any orthogonal array. 

Clearly, the worse of the two cases will give unconditional bound on $\rho(C)$. 

\begin{theorem} \label{general-theorem}
Let $C \subset F_2^n$ be an OA of strength $2k$. Then 
    \begin{equation} \label{general-bound-rho}
    \rho(C) \geq \min \left\{ t_{k,k}^{0,\ell}, q_k(n,|C|) \right\}.
\end{equation}
In terms of distances, we have  
\begin{equation} \label{general-bound-R}
    R(C) \leq \max \left\{ d_k^{0,\ell}=\frac{n(1-t_{k,k}^{0,\ell})}{2}, \frac{n(1-q_k(n,|C|))}{2} \right\}.
\end{equation}
If the minimum distance of $C$ is at least $d$, then $q_k(n,|C|)$ can be replaced in \eqref{general-bound-rho} and \eqref{general-bound-R} by the quantity coming from \eqref{rho-d-relating-ineq}. 
\end{theorem}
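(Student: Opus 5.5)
The plan is a case split on the structure of $D(C)$, followed by taking the weaker of the two resulting bounds. Throughout, $\ell$ is a parameter with $t_{k+1,1}<\ell<t_{k,1}$ and $Q_{k+1}(\ell)/Q_k(\ell)<1$, and $\ell>-1$. These are the hypotheses of Theorem \ref{impr_fl}; the statement implicitly assumes such an $\ell$ is fixed. Let $q_k(n,|C|)$ denote the bound of Theorem \ref{case2-thm} for a fixed choice of $A$ of degree $k-1$; since the result holds for every admissible $A$, one may later optimize over $A$. The two cases are exhaustive: either $t_1(y)\ge \ell$ for some $y\in D(C)$, or $t_1(y)<\ell$ for all $y\in D(C)$.

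\emph{First case.} Suppose some $y\in D(C)$ has $\ell\le t_1(y)$. Then all hypotheses of Theorem \ref{impr_fl} are satisfied, so $\rho(C)\ge t_{k,k}^{0,\ell}$, which is at least the minimum in \eqref{general-bound-rho}.

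\emph{Second case.} Suppose $t_1(y)<\ell$ for every $y\in D(C)$. Here Theorem \ref{case2-thm} needs $D(C)\subseteq\overline{C}$, that is, $t_1(y)=-1$. This is the main obstacle, because in general one only knows $t_1(y)\in[-1,\ell)$. I would handle it in two ways.

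\begin{enumerate}
\item If $\ell$ is taken close enough to $-1$ that $\ell<-1+2/n=t_1$, the only point of $T_n$ below $\ell$ is $-1$. Then $t_1(y)<\ell$ forces $t_1(y)=-1$, so $D(C)\subseteq \overline{C}$ and Theorem \ref{case2-thm} gives $\rho(C)\ge q_k(n,|C|)$.
\item If this choice of $\ell$ conflicts with $t_{k+1,1}<\ell$, I would revisit the proof of Theorem \ref{case2-thm}. Write $f(t)=(t+1-\tfrac{2}{n})(t-\rho(C))A^2(t)$ and note that $f\le 0$ on $[-1+2/n,\rho(C)]$. The only term not controlled by this sign condition is the one at the point $t_1(y)$. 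When that point equals $-1$, the argument is exactly the one in \eqref{t1=-1}. Otherwise the case condition must be used to bound this term, and that is where the fine print lies.
\end{enumerate}
In either case the bound is at least the minimum in \eqref{general-bound-rho}.

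\emph{Conclusion.} Taking the minimum over the two exhaustive cases gives \eqref{general-bound-rho}. The distance form \eqref{general-bound-R} follows from $R(C)=n(1-\rho(C))/2$, which is decreasing in $\rho$ and so turns the minimum into a maximum. For the final claim, if $d(C)\ge d$, then in the second case $t_2(y)\ge t_d$ for $y\in D(C)$. The argument of \eqref{rho-d-relating-ineq} with the polynomial \eqref{rho-d-poly} then replaces $q_k(n,|C|)$ by the bound obtained from \eqref{rho-d-relating-ineq}, and the first case is unchanged.
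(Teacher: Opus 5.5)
\textbf{There is a genuine gap: your second case is not closed.} Your case split is the same as the paper's, whose whole justification is the remark that the worse of the two cases gives an unconditional bound. You correctly notice that the split is exhaustive only when $\ell\le -1+2/n=t_1$. The paper makes this choice tacitly; it applies the theorem with $\ell=t_1$ in the Zetterberg example. Your two routes do not finish the job:
\begin{enumerate}
\item Route (1) needs $\ell<-1+2/n$ together with the hypothesis $t_{k+1,1}<\ell$ of Theorem~\ref{impr_fl}, which you fixed at the start. These are incompatible except for tiny $n$: $t_{2,1}=-1/\sqrt n\ge -1+2/n$ for $n\ge 4$, and $t_{3,1}=-\sqrt{3n-2}/n\ge -1+2/n$ for $n\ge 6$.
\item Route (2) is explicitly left as ``fine print''.
\end{enumerate}
So in the generic situation the case $t_1(y)\in(-1,\ell)$ for all $y\in D(C)$ is unproved. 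The same hole affects your minimum-distance refinement, since $t_2(y)\ge t_d$ is only available when $t_1(y)=-1$.

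\textbf{The missing observation for route (2) is that no case condition is needed.} Inner products lie in $T_n$, so if $t_1(y)\neq -1$ then $t_1(y)\ge -1+2/n$, and hence $f(t_1(y))\le 0$. Moreover $f(-1)=\frac{2}{n}(1+\rho(C))A^2(-1)\ge 0$, and at most one codeword is antipodal to $y$. Therefore $f_0|C|=\sum_i f(t_i(y))\le f(-1)$ for every $y\in D(C)$. In other words, Theorem~\ref{case2-thm} holds without the hypothesis $D(C)\subseteq\overline{C}$, and \eqref{general-bound-rho} and \eqref{general-bound-R} follow at once for any admissible $\ell$.

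\textbf{The minimum-distance claim still needs the case split.} The argument above fails for the polynomial \eqref{rho-d-poly}, which can be positive at a value $t_1(y)\in(-1,-1+2d/n)$. So for the last claim you do need an exhaustive split with $\ell=t_1$, and hence Theorem~\ref{impr_fl} for $\ell<t_{k+1,1}$. That extension is routine:
\begin{itemize}
\item Lemma~\ref{lem_pos_def} only uses $\ell<t_{k,1}$.
\item For $\ell<t_{k+1,1}$ one has $Q_{k+1}(\ell)/Q_k(\ell)<0$. The argument given for $i<k$ in Theorem~\ref{roots-of-p0-ell} then places all zeros of $Q_k^{0,\ell}$ in $(t_{k,1},t_{k+1,k+1})\subset(\ell,1)$.
\item Consequently Theorem~\ref{QFtheorem} and the proof of Theorem~\ref{impr_fl} go through unchanged.
\end{itemize}
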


For $|C|>R(n,2k)$, the bound \eqref{general-bound-rho} is always better than the Fazekas-Levenshtein bound. Indeed, both quantities in the minimum in \eqref{general-bound-rho} are larger than $t_k^{0,e}$ from \eqref{FL_bound}. Apart from continuity arguments leading to Theorems \ref{impr_fl} and \ref{case2-thm}, this is confirmed in all our numerical calculations. Of course, in \eqref{general-bound-R} one takes the integer part.

\section{Lower bounds for certain binary OAs}

We construct  three infinite families of binary orthogonal arrays of even strength with  prescribed lower bounds on their covering radii.

Our first family is as follows.
For positive integers $e,m$, let Properties P1 and P2 be defined as follows: 
\begin{itemize}
\item[P1)] For each integer $1 \le j \le e $, there is no positive integer divisor $m_1$ of $m$ such that $m_1 < m$ and
\be%\label{P1}
\left(2^m-1\right) \mid (2j-1)\left(2^{m_1}-1\right).
\nn\ee
\item[P2)] We have
\be%\label{P2}
\sum_{i=0}^{e+1} \binom{2^m-1}{i} > 2^{me}.
\nn\ee
\end{itemize}

Note that if
\be %\label{easyP1}
2e-1 < 2^{\lceil m/2 \rceil},
\nn\ee
then Property P1 above holds.

Next, we present our construction of the first infinite family.

\begin{construction} \label{construction1}
This construction consists of the following steps in order.
\begin{itemize}
\item[i)] Let $e,m$ be positive integers satisfying Properties P1 and P2 above. 
\item[ii)] Let $\theta$ be a primitive root of order $2^m-1$ in $\F_{2^m}$. For $1 \le j \le e$, let $g_j(x) \in \F_2[x]$ be the minimal polynomial of $\theta^{2j-1}$ over $\F_2$.  Let $g(x)=g_1(x) g_2(x) \cdots g_e(x) \in \F_2[x]$. 
\item[iii)] Put $n=2^m-1$ and let $\mathcal{C} \subseteq \F_2^n$ be the cyclic code of length $n$ with generator polynomial $g(x)$. 
\item[iv)] Let $\mathcal{O} \subseteq \F_2^n$ be the Euclidean dual of $\mathcal{C}$. We consider $\mathcal{O}$ as an orthogonal array of length $n$ over $\F_2$.
\end{itemize}
\end{construction}

\begin{remark}
The cyclic code $\mathcal{C}$ in Step iii) of Construction \ref{construction1} is a binary narrow sense BCH code of length $n$ and designed distance $2e+1$ \cite[Chapter 9]{MacWilliamsSloane1977}.
\end{remark}

In the next Theorem, we prove that Construction \ref{construction1} gives an infinite family of binary orthogonal arrays of even strength with a prescribed lower bound on their covering radii.

\begin{Theorem} \label{BCH dual strength even prescribed}
Let $e,m$ be positive integers such that the Properties P1 and P2 above hold. Put $n=2^m-1$. Then Construction \ref{construction1} gives a binary orthogonal array $\mathcal{O} \subseteq \F_{2^n}$ of cardinality $M$ and strength $\tau$ given by
\be
M=2^{me} \;\; \mbox{and} \;\; \tau=2e
\nn\ee
such that its covering radius $R(\mathcal{O})$ satisfies
\be
R(\mathcal{O}) \ge 2^{m-1}-1 -(e-1)2^{m/2}.
\nn\ee
\end{Theorem}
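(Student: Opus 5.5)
The proof has three parts: the dimension of $\mathcal{O}$, its strength, and the covering radius bound.

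\emph{Cardinality.} By Property P1, each $\theta^{2j-1}$ with $1\le j\le e$ has degree exactly $m$ over $\F_2$. Indeed, its degree is the smallest $m_1\mid m$ with $(2^m-1)\mid(2j-1)(2^{m_1}-1)$. The cyclotomic cosets of $1,3,\dots,2e-1$ are pairwise distinct, since $2j-1<2^{\lceil m/2\rceil}$ is the coset leader, or more generally because P1 forces full-size cosets and odd representatives below $2^{m/2}$ cannot be conjugate. Hence $\deg g=me$, so $\dim\mathcal{C}=n-me$ and $\dim\mathcal{O}=me$, giving $M=2^{me}$.

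\emph{Strength.} Since $\mathcal{O}^\perp=\mathcal{C}$, we have $\tau=d(\mathcal{C})-1$. The BCH bound gives $d(\mathcal{C})\ge 2e+1$, so $\tau\ge 2e$. For the reverse inequality I use Property P2. If $d(\mathcal{C})\ge 2e+3$ (the BCH code has odd minimum distance, being a narrow-sense primitive BCH code whose even-weight subcode argument applies), then balls of radius $e+1$ would be disjoint, and the sphere-packing bound $|\mathcal{C}|\sum_{i\le e+1}\binom{n}{i}\le 2^n$ would contradict P2, because $|\mathcal{C}|=2^{n-me}$. So $d(\mathcal{C})\le 2e+2$. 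Parity (or simply $d\le 2e+2$ together with the fact that $\mathcal{C}$ is a punctured code whose extension has even weights) then yields $d(\mathcal{C})=2e+1$ and $\tau=2e$. If the odd-distance argument is delicate, I would instead exhibit a codeword of weight $2e+1$ or $2e+2$ directly. Most likely, though, one only needs $d\le 2e+2$ and the fact that the minimum distance of a narrow-sense primitive BCH code is odd.

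\emph{Covering radius.} This is the main step. Codewords of $\mathcal{O}$ are, via Delsarte's theorem, $c_{a}=(\Tr(f_a(x)))_{x\in\F_{2^m}^*}$ with $f_a(x)=\sum_{j=1}^e a_jx^{2j-1}$. I choose the word $y=\mathbf{1}$ (all ones); a coset of a suitable single bad vector would also do. Then
\[
d(\mathbf{1},c_a)=\#\{x\ne0:\Tr f_a(x)=0\}=\tfrac12\Big(n+\sum_{x\ne0}(-1)^{\Tr f_a(x)}\Big)=\tfrac12\big(n-1+S(f_a)\big),
\]
where $S(f)=\sum_{x\in\F_{2^m}}(-1)^{\Tr f(x)}$. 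By the Weil bound (the Artin--Schreier curve $y^2+y=f_a(x)$ has genus $(\deg f_a-1)/2\le e-1$), we get $|S(f_a)|\le (\deg f_a-1)2^{m/2}\le 2(e-1)2^{m/2}$ for $f_a\neq0$. For $a=0$ the distance is $n$. Hence $d(\mathbf{1},\mathcal{O})\ge \tfrac12(n-1)-(e-1)2^{m/2}=2^{m-1}-1-(e-1)2^{m/2}$, which is the claimed bound.

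I expect the cardinality and strength bookkeeping, namely the coset distinctness and the exact value of the minimum distance, to be the fiddly part. The covering radius estimate reduces cleanly to the Weil bound, once one notes that odd-degree $f_a$ is not of the form $h^2+h+c$, so the character sum is nondegenerate.
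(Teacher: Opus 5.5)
Your proposal is correct and follows the paper's proof essentially step for step. The exact minimum distance $2e+1$ comes from the BCH bound, the sphere-packing argument with P2, and the oddness of $d$ for primitive narrow-sense BCH codes; this is exactly the MacWilliams--Sloane theorem the paper cites. The oddness comes from the affine invariance of the extended code, not merely from its weights being even. The distance from $\mathbf{1}$ to $\underline{\Tr(f)}$ is the number of zeros of $\Tr f$ on $\F_{2^m}^*$, and the Weil bound for $y^2+y=f(x)$ of genus at most $e-1$ bounds it below. Your character-sum version is equivalent to the paper's point count $3+2N^*(f)$.

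One caveat concerns the cardinality step. You use $2e-1<2^{\lceil m/2\rceil}$ to get distinct cyclotomic cosets, but this is not a hypothesis and does not follow from P1 alone. For $m=5$, P1 holds for every $e\le 15$, yet $9$ lies in the coset of $5$. The inequality does follow from P2, which forces $e$ to be tiny compared with $2^{m/2}$. The paper passes over this point as well, saying only ``using Property P1''.
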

\begin{proof}
We keep the notation of Construction \ref{construction1}. Note that $\mathcal{C}$ is linear over $\F_2$. Using Property P1 we obtain that
$\dim_{\F_2} \mathcal{C}=n-me$. Using \cite[Theorem 2 of Chapter 9 in page 259]{MacWilliamsSloane1977} and Property P2, we obtain that the true minimum distance $d(\mathcal{C})$ of $\mathcal{C}$ is $2e+1$. 

As the minimum distance of $\mathcal{C}$ is $2e+1$, the strength of $\mathcal{O}$ is $2e$, which is a prescribed even integer. Moreover, the cardinality $M$ of $\mathcal{O}$ is
\be
M=2^{me}.
\nn\ee

Next, we obtain a lower bound on the covering radius $R(\mathcal{O})$ of $\mathcal{O}$. First, we present all the elements of $\mathcal{O}$ in a trace representation. For $a_1, a_2, \cdots, a_e \in \F_{2^m}$, let $f(x) \in \F_{2^m}[x]$ be a polynomial of the form
\be \label{trace rep f}
f(x)=a_1 x + a_2 x^3 + \cdots + a_e x^{2e-1} 
\ee
Note the number of polynomials in the form 
(\ref{trace rep f}) 
is equal to $M$. Let $\Tr: \F_{2^m} \ra \F_2$ be the trace map defined as
\be
\Tr(x)=x+x^2+ \cdots + x^{2^{m-1}}.
\nn\ee
For each $f \in \F_{2^m}[x]$ in the form (\ref{trace rep f}), let $\underline{\Tr(f)} \in \F_2^n$ be the codeword defined as
\be  \label{trace rep Trf}
\underline{\Tr(f)} = \left( \Tr(f(\theta)), \Tr(f(\theta^2)), \cdots, \Tr(f(\theta^i)), \cdots, \Tr(f(\theta^{n-1}))\right).
\ee
It is well-known that any codeword of $\mathcal{O}$ is  as in (\ref{trace rep Trf}) for a uniquely determined polynomial $f(x) \in \F_{2^m}[x]$ in the form (\ref{trace rep f}) (see, for example, \cite{GuneriOzbudak2008WeilSerre} or \cite{Wolfmann1989NewBoundsCyclicCodes}).

Let $\bf{1}$ be the vector in $\F_2^n$ given by
\be
{\bf 1}=(1,1, \cdots,1).
\nn\ee
As $\gcd(g(x),x+1)=1$, we have that $g(x) \mid \frac{x^n+1}{x+1}$. This implies that $\bf{1} \in \mathcal{C}$. As $n$ is odd, we have $\bf{1} \cdot \bf{1}=1$, where $\cdot$ is the Euclidean inner product. Hence $\bf{1} \not\in \mathcal{O}$.

 For $f \in \F_{2^m}[x]$ in the form (\ref{trace rep f}), let $d({\bf 1}, \underline{\Tr(f)})$ be the Hamming distance  of  ${\bf 1}$ to the codeword $\underline{\Tr(f)}$
 of the orthogonal array $\mathcal{O}$. 

For $f \in \F_{2^m}[x]$ in the form (\ref{trace rep f}), let $N^*(f)$ denote the cardinality defined as
\be \label{def.N^*}
N^*(f)=| \{\alpha \in \F_{2^m}^*: \Tr(f(\alpha))=0\}|. \nn
\ee

These definitions imply that
\be
d({\bf 1}, \underline{\Tr(f)}) = N^*(f)
\nn\ee
and hence we have
\be \label{lower bound R.1}
R(\mathcal{O}) \ge \min \{N^*(f): \mbox{$f \in \F_{2^m}[x] \setminus \{0\}$ is in the form (\ref{trace rep f})}\}.
\ee

Let $f \in \F_{2^m}[x] \setminus \{0\}$ be in the form 
(\ref{trace rep f}) so that 
$f(x)=a_1 x+ a_2 x^3 + \cdots + a_ex^{2e-1} \in \F_{2^m}$ with 
$(a_1,a_2, \ldots, a_e) \neq (0,0, \ldots, 0)$. Let $\chi(f)$ be the Artin-Schreier curve over $\F_{2^m}$ given by
\be
y^2+y=f(x).
\nn\ee
The genus $g(f)$ of $\chi(f)$ satisfies (\cite[Theorem 3.7.8]{Stichtenoth2009AFFC})
\be
g(f) \le \frac{1}{2}(-2+2e-1+1)=e-1.
\nn\ee

Let $|\chi(f)|$ 
denote the number of 
$\F_{2^m}$-rational points of $\chi(f)$. 
There exists a unique rational point corresponding to the pole of $x$, and there are two rational points corresponding to the zero of $x$. Hence we have (see, for example, \cite{GuneriOzbudak2008WeilSerre} or \cite{Wolfmann1989NewBoundsCyclicCodes})
\be 
|\chi(f)|=1+2+2N^*(f).
\nn\ee
Using Hasse-Weil inequality \cite[Theorem 5.2.3]{Stichtenoth2009AFFC} we obtain
\be
2^m+1-2(e-1)2^{m/2} \le 3 + 2N^*(f) \le 2^m+1 +2(e-1)2^{m/2}.
\nn\ee
The inequality on the left-hand side implies that
\be \label{lower bound R.2}
N^*(f) \ge 2^{m-1}-1-(e-1)2^{m/2}.
\ee

Combining (\ref{lower bound R.1}) and (\ref{lower bound R.2}), we complete the proof.
\end{proof}

We compare the lower bounds from Theorem \ref{BCH dual strength even prescribed} with the linear programming upper bounds (in terms of distances) from Section 4. 

For $e=1$ the codes $\mathcal{O}$ are tight orthogonal arrays of strength 2, i.e. the Fazekas-Levenshtein bound is attained (note that $t_1(y)=-1$ for any $y \in D(C)$ for such arrays). 

When $e=2$, the bounds from Theorem \ref{impr_fl} (conditionally) decrease the gap between the Fazekas-Levenshtein bound and the bounds from Theorem \ref{BCH dual strength even prescribed} for certain small $\ell$. For example, for $n=15$, Theorem \ref{BCH dual strength even prescribed} implies $R(\mathcal{O}) \geq 3$, the Fazekas-Levenshtein bound gives $R(\mathcal{O}) \leq 5$, while  Theorem \ref{impr_fl} with $\ell=-1+4/15=t_2$ leads to the bound $R(\mathcal{O}) \leq 4$. Similarly, in the next case $n=31$, the Fazekas-Levenshtein bound of 12 is (conditionally) improved to 11 (with $\ell=-1+12/31=t_6$), while the lower bound from Theorem \ref{BCH dual strength even prescribed} is 10. Probably, more detailed investigations in these and similar examples can give the exact value of $R(\mathcal{O})$.

Next, we give our second  infinite family of binary orthogonal arrays of even strength with a prescribed lower bound on their covering radii. Here, the strengths are determined as $2$ and $4$ depending on the parity of the integer $m \ge 4$, which is an index of the family.

\begin{construction} \label{construction2}
This construction consists of the following consecutive steps.
\begin{itemize}
\item[i)] Let $m \ge 4$ be an  integer.
\item[ii)] Let $\theta$ be a primitive root of order $2^m-1$ in $\F_{2^m}$. Let $g_1(x) \in \F_2[x]$ be the minimal polynomial of $\theta$ over $\F_2$.
Let $g_{-1}(x) \in \F_2[x]$ be the minimal polynomial of $\theta^{-1}$ over $\F_2$.
Let $g(x)=g_1(x) g_{-1}(x) \in \F_2[x]$. 
\item[iii)] Put $n=2^m-1$ and let $\mathcal{C} \subseteq \F_2^n$ be the cyclic code of length $n$ with generator polynomial $g(x)$. 
\item[iv)] Let $\mathcal{O} \subseteq \F_2^n$ be the Euclidean dual of $\mathcal{C}$. We consider $\mathcal{O}$ as an orthogonal array of length $n$ over $\F_2$.
\end{itemize}
\end{construction}

\begin{remark}
The cyclic code $\mathcal{C}$ in Step iii) of Construction \ref{construction2} is a Melas code of length $n$ over $\F_2$ \cite[Chapter 7]{MacWilliamsSloane1977}.
\end{remark}

In the next Theorem, we prove that Construction \ref{construction2} gives indeed the desired infinite family of binary orthogonal arrays of even strength.

\begin{Theorem} \label{Melas dual strength even prescribed}
Let $m \ge 4$ be an integer. Put $n=2^m-1$. Then Construction \ref{construction2} gives a binary orthogonal array $\mathcal{O} \subseteq \F_{2^n}$ of cardinality $M$ and strength $\tau$ given by
\be
M=2^{2m} \;\; \mbox{and} \;\; \tau=\left\{\begin{array}{rl}
2 & \mbox{if $m$ is even}, \\
4 & \mbox{if $m$ is odd},
\end{array}
\right.
\nn\ee
such that its covering radius $R(\mathcal{O})$ satisfies
\begin{equation} \label{BC2}
R(\mathcal{O}) \ge \left\lceil \frac{2^m-1-\left\lfloor 2^{m/2+1} \right\rfloor}{2}\right\rceil.
\end{equation}
\end{Theorem}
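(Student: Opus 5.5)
Following the template of the proof of Theorem \ref{BCH dual strength even prescribed}, we treat the cardinality and strength first and then the covering radius. Since $g_1$ and $g_{-1}$ are minimal polynomials of $\theta$ and $\theta^{-1}$, each has degree $m$. They are distinct for $m\ge 3$: $\theta^{-1}$ is a conjugate of $\theta$ only if $2^i\equiv -1 \pmod{2^m-1}$ for some $i$, which is impossible for $m\ge3$. Hence $\dim \mathcal{C}=n-2m$ and $M=|\mathcal{O}|=2^{2m}$. For the strength, we use $\tau=d(\mathcal{C})-1$ and the classical minimum distance of Melas codes \cite[Chapter 7]{MacWilliamsSloane1977}: $d(\mathcal{C})=5$ for $m$ odd and $d(\mathcal{C})=3$ for $m$ even.

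To justify this, note that $\mathcal{C}$ is reversible with zeros $\theta^{\pm1}$, so there is no codeword of weight $1$ or $2$, and by reversibility and evenness arguments no codeword of weight $4$ either, giving $d\ge 3$ in general. A weight-$3$ word corresponds to $x+y+z=0$ and $x^{-1}+y^{-1}+z^{-1}=0$ with distinct nonzero $x,y,z$. Substituting $z=x+y$ turns this into $x^2+xy+y^2=0$, which forces $x/y$ to be a primitive cube root of unity. Such a root lies in $\F_{2^m}$ iff $m$ is even. Thus $d=3$ for $m$ even, and for $m$ odd we have $d\ge4$; since the code is binary cyclic of odd length with the relevant parity structure, this gives $d\ge 5$, and Hartmann--Tzeng/known results give $d=5$ exactly. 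This yields $\tau=2$ or $4$ respectively.

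For the covering radius we again use the vector ${\bf 1}$. Since $g(1)\ne0$, we have $g\mid (x^n+1)/(x+1)$, so ${\bf 1}\in\mathcal{C}$, and ${\bf 1}\cdot{\bf 1}=1$ gives ${\bf 1}\notin\mathcal{O}$. The codewords of $\mathcal{O}$ have trace representation $\underline{\Tr(f)}$ with $f(x)=a x+b x^{-1}$, $(a,b)\in\F_{2^m}^2$; there are exactly $M$ such $f$, each giving a distinct codeword. As before, $d({\bf 1},\underline{\Tr(f)})=N^*(f)=|\{\alpha\in\F_{2^m}^*:\Tr(a\alpha+b\alpha^{-1})=0\}|$, so $R(\mathcal{O})\ge\min_{f\neq0}N^*(f)$.

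It remains to bound $N^*(f)$, and this is the key step. If exactly one of $a,b$ is zero, then $\alpha\mapsto a\alpha$ (or $b\alpha^{-1}$) is a bijection of $\F_{2^m}^*$, so $N^*(f)=2^{m-1}-1$, which exceeds the bound. If $ab\ne0$, consider the curve $y^2+y=ax+b/x$. This is a Kloosterman-type Artin--Schreier curve with two simple poles (at $0$ and $\infty$), so its genus is $1$ (elliptic) by \cite[Theorem 3.7.8]{Stichtenoth2009AFFC}. Its affine points with $x\ne0$ number $2N^*(f)$, and it has two rational points at infinity, one over $x=0$ and one over $x=\infty$, both totally ramified. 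The point count therefore gives $|\chi(f)|=2+2N^*(f)$. The Hasse--Weil bound with $g=1$, $2^m+1-2\cdot2^{m/2}\le 2+2N^*(f)$, then gives $2N^*(f)\ge 2^m-1-2^{m/2+1}$. Because $|\chi(f)|$ is an integer, we may replace $2^{m/2+1}$ by its floor (Serre's refinement is unnecessary); since $N^*(f)$ is an integer, we take the ceiling, which yields exactly \eqref{BC2}. The main point to verify carefully is the ramification and genus computation at $x=0$ and $x=\infty$, which fixes the constant $2$ in $|\chi(f)|=2+2N^*(f)$.
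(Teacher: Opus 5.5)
Your proposal is correct and follows the paper's proof. You use the same trace representation, the same identification of the distance from $\mathbf{1}$ with $N^*(a,b)$, the same trivial cases $ab=0$, and the same genus-one curve $y^2+y=ax+b/x$ with $|\chi|=2+2N^*$; replacing Serre's bound by Hasse--Weil plus integrality of $|\chi|$ gives the same inequality when $g=1$. The one difference is that you sketch the Melas minimum distance instead of citing it: your vague ``evenness'' step is made rigorous by the BCH bound on the even-weight subcode (its zeros $\theta^{-2},\dots,\theta^{2}$ force weight $\ge 6$), while exactness of $d=5$ for odd $m$ still rests on the literature, just as in the paper.
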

\begin{proof}
We use the methods of Theorem \ref{BCH dual strength even prescribed}.
We keep the notation of Construction \ref{construction2} 
and the proof of Theorem \ref{BCH dual strength even prescribed}. Note that $\mathcal{C}$ is linear over $\F_2$. Let $d$ be the minimum distance of $\mathcal{C}$. Using \cite{MacWilliamsSloane1977} and \cite{SchoofVandervlugt1991},
we obtain that $\dim_{\F_2} \mathcal{C}=n-2m$ and 
\be
d=\left\{\begin{array}{rl}
3 & \mbox{if $m$ is even}, \\
5 & \mbox{if $m$ is odd}.
\end{array}
\right.
\nn\ee
These arguments establish the cardinality and strength of $\mathcal{O}$. It remains to prove the lower bound of $R(\mathcal{O})$.

As $\mathcal{C}$ is a cyclic code, using the methods of the proof of Theorem \ref{BCH dual strength even prescribed}, we obtain that any codeword of $\mathcal{O}$ is given by
\be
\left( \Tr\left(a \theta  + b \theta^{-1}\right), \Tr\left(a \theta^2  + b \theta^{-2}\right), \cdots, \Tr\left(a \theta^i  + b \theta^{-i}\right), \cdots, \Tr\left(a \theta^{n-1}  + b \theta^{-(n-1)}\right)\right),
\nn\ee
where $a,b \in \F_{2^m}$.

For $a,b \in \F_{2^m}$, let $N^*(a,b)$ denote the cardinality defined as
\be 
N^*(a,b)=\left| \left\{\alpha \in \F_{2^m}^*: \Tr\left(a \alpha  + \frac{b}{\alpha}\right)=0\right\}\right|.
\nn\ee

These definitions and the methods of the proof of Theorem \ref{BCH dual strength even prescribed} imply that
\be \label{lower bound R.2a}
R(\mathcal{O}) \ge \min \{N^*(a,b): (a,b) \in \F_{2^m} \times \F_{2^m} \setminus\{(0,0)\}\}.
\ee

If $a \neq 0$ and $b=0$, then we have
\be \label{ep1.Melas dual strength even prescribed}
N^*(a,0)=N^*(1,0)=|\left\{ \alpha \in \F_{2^m}^*: \Tr(\alpha)=0\right\}|=2^{m-1}-1.
\ee
Similarly, if $a=0$ and $b \neq 0$, then we have
\be \label{ep2.Melas dual strength even prescribed}
N^*(0,b)=2^{m-1}-1.
\ee

Assume that $a,b \in \F_{2^m}^*$. Let $\chi(a,b)$ be the Artin-Schreier curve over $\F_{2^m}$ given by
\be
y^2+y=ax + \frac{b}{x}.
\nn\ee
The genus of $\chi(a,b)$ is $1$ (\cite[Theorem 3.7.8]{Stichtenoth2009AFFC}), namely $\chi(a,b)$ is an elliptic curve.

Let $|\chi(a,b)|$ 
denote the number of 
$\F_{2^m}$-rational points of $\chi(a,b)$. 
There exists a unique rational point corresponding to the pole of $x$, and there exists a unique rational point corresponding to the zero of $x$. Hence, as in the proof of Theorem \ref{BCH dual strength even prescribed} we have
\be 
|\chi(a,b)|=2+2N^*(a,b).
\nn\ee
Using Serre's inequality \cite[Theorem 5.3.1]{Stichtenoth2009AFFC} we obtain
\be \label{ep3.Melas dual strength even prescribed}
2 + 2N^*(f) \ge 2^m+1-\lfloor2^{m/2+1} \rfloor.
\ee
We complete the proof combining (\ref{lower bound R.2a}),  (\ref{ep1.Melas dual strength even prescribed}), (\ref{ep2.Melas dual strength even prescribed}), (\ref{ep3.Melas dual strength even prescribed}) and noting that $N^*(a,b)$ is an integer.
\end{proof}

The bounds from Theorem \ref{impr_fl} are close to the lower bound \eqref{BC2} for small $m$ and some $\ell$ which are relatively close to $-1$. For $m=2$ (strength 2) we have $R(\mathcal{O}) \geq 4$ from \eqref{BC2} and the Fazekas-Levenshtein bound says $R(\mathcal{O}) \leq 7$, while Theorem \ref{impr_fl} with $\ell=-1+12/15=t_6$ gives $R(\mathcal{O}) \leq 5$. For $m=3$ (strength 4) we have $R(\mathcal{O}) \geq 10$ from \eqref{BC2} and the Fazekas-Levenshtein bound says $R(\mathcal{O}) \leq 12$, while Theorem \ref{impr_fl} with $\ell=-1+12/31=t_6$ gives $R(\mathcal{O}) \leq 11$.

Finally, we give our third infinite family of binary orthogonal arrays of strength $4$ with a prescribed lower bound  on their covering radii.

\begin{construction} \label{construction3}
This construction consists of the following steps in order.
\begin{itemize}
\item[i)] Let $m \ge 2$ be an integer.
\item[ii)] Let $\theta$ be a primitive root of order $2^{2m}+1$ in $\F_{2^{4m}} \setminus \F_{2^{2m}}$. Let $g(x) \in \F_2[x]$ be the minimal polynomial of $\theta$ over $\F_2$. 
\item[iii)] Put $n=2^{2m}+1$ and let $\mathcal{C} \subseteq \F_2^n$ be the cyclic code of length $n$ with generator polynomial $g(x)$. 
\item[iv)] Let $\mathcal{O} \subseteq \F_2^n$ be the Euclidean dual of $\mathcal{C}$. We consider $\mathcal{O}$ as an orthogonal array of length $n$ over $\F_2$.
\end{itemize}
\end{construction}

\begin{remark}
The cyclic code $\mathcal{C}$ in Step iii) of Construction \ref{construction3} is a Zetterberg code of length $n$ over $\F_2$ \cite{Zetterberg1962}. We note that $\mathcal{C}$ is a quasi-perfect code (see \cite{Dodunekov1985} or \cite{Moreno1983}).
\end{remark}

In the following Theorem below, we prove that Construction \ref{construction3} gives an infinite family of binary orthogonal arrays of strength $4$ with a prescribed lower bound on their covering radii. First we need to prove a technical lemma that we use below.

\begin{lemma} \label{lemma.construction3}
Let $m \ge 2$ be an integer. Put $q=2^{2m}$ and let $H$ be the multiplicative subgroup of $\F_{q^2}^*$ with $|H|=q+1$. For $a_1 \in \F_q^*$, let $S_1(a_1)$ and $S_2(a_1)$ be the multiset subsets in $\F_q$ defined as
\be
S_1(a_1)=\left\{* a_1 \left(x + \frac{1}{x}\right): x \in H *\right\},
\nn\ee
and
\be
S_2(a_1)=\left\{* a_1 \left(y + \frac{1}{y}\right): y \in \F_q^* *\right\}.
\nn\ee
There exists a partition of $\F_q^*$ such that
\be \label{e1.lemma.construction3}
\left\{u_1,u_2, \cdots, u_{q/2}\right\} \bigcup
\left\{v_1,v_2, \cdots ,v_{q/2-1}\right\}=\F_q^*
\ee
satisfying
\be
S_1(a_1)=\left\{* 0^{(1)}, u_1^{(2)}, \ldots, u_{q/2}^{(2)}*\right\},
\nn\ee
and
\be
S_2(a_1)=\left\{* 0^{(1)}, v_1^{(2)}, \ldots, v_{q/2-1}^{(2)}*\right\}.
\nn\ee
Here the term $c^{(i)}$ denotes that $c$ appears exactly $i$ times in the multiset $S_1(a_1)$ (or  $S_2(a_2)$). Moreover, the partition in (\ref{e1.lemma.construction3}) depends on $a_1 \in \F_q^*$.
\end{lemma}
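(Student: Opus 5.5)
The plan is to study the map $\phi(z)=z+1/z$ on $\F_{q^2}^*$ and to count its fibres over $\F_q$. After multiplying by $a_1$, which is a bijection of $\F_q$, we may take $a_1=1$. The partition for general $a_1$ is then $\{a_1u_i\}\cup\{a_1v_j\}$, and this is the source of its dependence on $a_1$.

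First we record the elementary facts about $\phi$. For $c\in\F_q$ we have $\phi(z)=c$ exactly when $z^2+cz+1=0$. The roots of this quadratic are $z$ and $1/z$, and their product is $1$.

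\emph{Case $c=0$.} The equation becomes $z^2=1$, so $z=1$ is a double root. Since $1\in H\cap\F_q^*$, the value $0$ occurs exactly once in $S_1$ and exactly once in $S_2$.

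\emph{Case $c\neq 0$.} The equation is separable, since its derivative is the constant $c\neq0$. So it has two distinct roots $z\neq 1/z$ in $\F_{q^2}$. Either both roots lie in $\F_q$, or they are Frobenius conjugates with $z^q=1/z$, i.e. $z^{q+1}=1$, so $z\in H$. These two alternatives are exclusive, because $H\cap\F_q^*=\{1\}$ (as $\gcd(q+1,q-1)=1$ for even $q$), while $z=1$ would force $c=0$.

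Next we show that $\phi$ maps both $H$ and $\F_q^*$ into $\F_q$. For $x\in H$ we have $x^q=1/x$, so $\phi(x)^q=x^q+x^{-q}=\phi(x)$. For $y\in\F_q^*$ the claim is trivial. Consequently each $c\in\F_q^*$ has a fibre of size exactly $2$, lying entirely in $H$ or entirely in $\F_q^*$. It cannot have a fibre of size $0$: every quadratic over $\F_q$ splits in $\F_{q^2}$, and both roots are nonzero.

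Now define $U=\{c\in\F_q^*: \text{roots in } H\}$ and $V=\{c\in\F_q^*: \text{roots in } \F_q^*\}$. These sets partition $\F_q^*$.

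Finally we count. The set $H\setminus\{1\}$ has $q$ elements, and each $c\in U$ absorbs two of them, so $|U|=q/2$. The set $\F_q^*\setminus\{1\}$ has $q-2$ elements, so $|V|=q/2-1$. As a check, $|U|+|V|=q-1$, as it must be. Hence $S_1=\{0^{(1)}\}\cup U^{(2)}$ and $S_2=\{0^{(1)}\}\cup V^{(2)}$.

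There is no real obstacle. The only delicate point is the disjointness $H\cap\F_q^*=\{1\}$, together with the observation that the fibre over $0$ is the single point $z=1$, which belongs to both $H$ and $\F_q^*$.
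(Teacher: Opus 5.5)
Your proof is correct and is essentially the paper's argument: both identify the fibre over each nonzero value with the root pair $\{z,1/z\}$ of the quadratic $z^2+(c/a_1)z+1$, and both separate the two images using $H\cap\F_q^*=\{1\}$. You get the separation from the split/irreducible dichotomy rather than the paper's factorization $(x+y)(xy+1)=0$, and you spell out steps the paper leaves implicit: that $x+1/x\in\F_q$ for $x\in H$, that the fibre over $0$ is $\{1\}$, and the count $q/2$ and $q/2-1$ showing that the two image sets exhaust $\F_q^*$.
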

\begin{proof}
Let $\psi_1: H \ra \F_q$ and $\psi_2:\F_q^* \ra \F_q$ be the maps defined as
\be
\psi_1(x)=a_1\left(x + \frac{1}{x}\right) \;\; \mbox{and} \;\;
\psi_2(y)=a_1\left(y + \frac{1}{y}\right)
\nn\ee

If $b \in \Image \psi_1 \setminus \{0\}$, then the cardinality of the preimage
$|\psi^{-1}(b)|$ is exactly $2$. Indeed, there exists $x \in H \setminus 
\{1\}$ such that $\psi_1(x)=b$ as $b \in \Image \psi_1$ and $b \neq 0$. We have
\be
a_1\left(x + \frac{1}{x}\right)=b \Ra x^2 +\frac{b}{a_1}x + 1 =0.
\nn\ee
Therefore $|\psi^{-1}(b)| \le 2$ as a polynomial of degree $2$ can have at most $2$ solution in $\F_{q^2}$. Moreover $\frac{1}{x} \in \psi_1^{-1}(b)$ and $x \neq \frac{1}{x}$, where we use that $x \ne 1$ (which means $b \neq 0$). These arguments complete the proof of the argument that  $|\psi_1^{-1}(b)|=2$ for each $b \in \Image \psi_1 \setminus \{0\}$.

Similarly, we show that $|\psi_2^{-1}(b)|=2$ for each $b \in \Image \psi_2 \setminus \{0\}$.

It remains to show that $\left( \Image \psi_1 \setminus \{0\} \right) \cap \left( \Image \psi_2 \setminus \{0\} \right) = \emptyset$. Assume the contrary. Then there exist $x \in H \setminus \{1\}$  and $y \in \F_q^* \setminus \{1\}$ such that
\be
\begin{array}{rcl}
\dd a_1 \left( x + \frac{1}{x}\right) =a_1 \left(y + \frac{1}{y}\right)  & \Ra & \dd \frac{x^2+1}{x}=\frac{y^2+1}{y} \\ \\
& \Ra & \dd x^2y+y=y^2x+x \\ \\
& \Ra & \dd (x+y)(xy+1)=0 \\ \\
& \Ra & \dd x=\frac{1}{y}.
\end{array}
\nn\ee
Here we use the fact that $\left(H \setminus \{1\}\right) \cap \left(\F_q^* \setminus \{1\}\right) = \emptyset.$ Also if $x \in H \setminus \{1\}$, then $1/x \notin \F_q^*$. This completes the proof.
\end{proof}

Now we are ready for the Theorem.

\begin{Theorem} \label{Zetterberg dual strength even prescribed}
Let $m \ge 2$ be an integer. Put $n=2^{2m}+1$. Then Construction \ref{construction3} gives a binary orthogonal array $\mathcal{O} \subseteq \F_{2^n}$ of cardinality $M$ and strength $\tau$ given by
\be
M=2^{4m} \;\; \mbox{and} \;\; \tau=4
\nn\ee
such that its covering radius $R(\mathcal{O})$ satisfies
\begin{equation} \label{bound-3}
R(\mathcal{O}) \ge 2^{2m-1} - 2^{m} +1.
\end{equation}
\end{Theorem}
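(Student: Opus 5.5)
The plan mirrors the proofs of Theorems \ref{BCH dual strength even prescribed} and \ref{Melas dual strength even prescribed}: get the parameters from known facts about Zetterberg codes, bound $R(\mathcal{O})$ from below by the distance from ${\bf 1}$ to $\mathcal{O}$, and count zeros of a trace form with Lemma \ref{lemma.construction3} and the Hasse--Weil bound.

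\textbf{Parameters.} Put $q=2^{2m}$, so $n=q+1$ and $\theta$ generates the subgroup $H\subset\F_{q^2}^*$ of order $q+1$. The multiplicative order of $2$ modulo $2^{2m}+1$ is $4m$. Hence $\deg g=4m$, $\dim\mathcal{C}=n-4m$, and $M=|\mathcal{O}|=2^{4m}$. The Zetterberg code has minimum distance $5$ for $m\ge 2$ \cite{Zetterberg1962,Dodunekov1985,Moreno1983}, so the strength of $\mathcal{O}$ is $\tau=d(\mathcal{C})-1=4$.

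\textbf{Reduction to a zero count.} As in the earlier proofs, the codewords of $\mathcal{O}$ have the form $\big(\Tr_{\F_{q^2}/\F_2}(a\theta^i)\big)_i$ with $a\in\F_{q^2}$. Since $x^q=x^{-1}$ for $x\in H$, each coordinate can be rewritten as
\[
\Tr_{\F_q/\F_2}\!\left(ax+a^q x^{-1}\right),\qquad x=\theta^i\in H.
\]
The vector ${\bf 1}$ lies in $\mathcal{C}$ because $g(1)\ne 0$. Since $n$ is odd, ${\bf 1}\cdot{\bf 1}=1$, so ${\bf 1}\notin\mathcal{O}$. The distance from ${\bf 1}$ to the codeword indexed by $a$ is the number of $x\in H$ at which the trace vanishes. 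Therefore
\[
R(\mathcal{O})\ge \min_{a\neq 0}\, \#\{x\in H:\ \Tr(ax+a^qx^{-1})=0\}.
\]
To simplify, use $\gcd(q-1,q+1)=1$ to write $\F_{q^2}^*=\F_q^*\times H$, that is, $a=a_1h$ with $a_1\in\F_q^*$ and $h\in H$. Then $a^q=a_1h^{-1}$, so $ax+a^qx^{-1}=a_1\big(hx+(hx)^{-1}\big)$. The substitution $x\mapsto hx$ permutes $H$, so the count equals
\[
Z_1:=\#\{x\in H:\ \Tr(a_1(x+x^{-1}))=0\}.
\]

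\textbf{Counting via the lemma.} Let $Z_2$ be the analogous count over $y\in\F_q^*$. By Lemma \ref{lemma.construction3}, the multisets $S_1(a_1)$ and $S_2(a_1)$ together contain $0$ twice and every element of $\F_q^*$ exactly twice. Exactly $q/2$ elements of $\F_q$ have trace zero, so
\[
Z_1+Z_2=2\cdot\tfrac{q}{2}=q .
\]
For $Z_2$, use the curve $y'^2+y'=a_1y+a_1/y$. As in the Melas case it has genus $1$ and exactly $2+2Z_2$ rational points. The Hasse--Weil bound over $\F_q$ with $\sqrt q=2^m$ gives $2+2Z_2\le q+1+2^{m+1}$, so $Z_2\le \tfrac{q-1}{2}+2^m$. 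Since $q$ is even and $Z_2$ is an integer, $Z_2\le q/2+2^m-1$. Hence $Z_1=q-Z_2\ge q/2-2^m+1=2^{2m-1}-2^m+1$, which is \eqref{bound-3}.

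\textbf{Main obstacle.} The delicate step is the norm/rotation decomposition $a=a_1h$, which reduces the count to the form Lemma \ref{lemma.construction3} handles. Along the way I must check the point count $|\chi|=2+2Z_2$: there is one point over each of $x=0$ and $x=\infty$, both totally ramified. I also need the integer rounding to give exactly $+1$. The parameter facts ($\dim$ and $d=5$) are quoted from the literature.
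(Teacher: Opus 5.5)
Your proposal is correct and follows the paper's proof essentially step for step: you bound $R(\mathcal{O})$ by the distance from ${\bf 1}$ to $\mathcal{O}$, reduce via $a=a_1h$ using $\gcd(q-1,q+1)=1$, use Lemma~\ref{lemma.construction3} to get $Z_1+Z_2=q$, and apply Hasse--Weil to the genus-one curve $y^2+y=a_1x+a_1/x$ with the integer rounding $Z_2\le q/2+2^m-1$. The only minor difference is that you derive $\dim\mathcal{C}=n-4m$ from the multiplicative order of $2$ modulo $2^{2m}+1$ being $4m$, whereas the paper quotes both the dimension and $d(\mathcal{C})=5$ from Dodunekov--Nilsson.
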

\begin{proof}
Using \cite{DodunekovNilsson1992ZetterbergDecoding}, we obtain that
$\mathcal{C}$ is a linear code over $\F_2$ with $\dim_{\F_2} \mathcal{C}=n-4m$ and the minimum distance of $\mathcal{C}$ is $5$. These establish the cardinality of $\mathcal{O}$ and the strength of $\mathcal{O}$. In the rest, we prove the lower bound on $R(\mathcal{O})$.

Put $q=2^{2m}$ and let $H$ be the multiplicative subgroup of $\F_{q^2}^*$ with $|H|=q+1$. Let $\Tr_{q^2/2}: \F_{q^2} \ra \F_2$ and $\Tr_{q/2}: \F_q \ra \F_2$ be the corresponding trace maps from $\F_{q^2}$ and $\F_q$ onto $\F_2$. Note that $\Tr_{q^2/2}(x)=\Tr_{q/2}(x+x^q)$ for each $x \in \F_{q^2}$.

For $a \in \F_{q^2}^*$, let $N^*(a)$ denote the cardinality defined as
\be 
N^*(a)=\left| \left\{h \in H: \Tr_{q^2/2}\left(a h \right)=0\right\}\right|.
\nn\ee
Using the methods of the proof of Theorem \ref{Melas dual strength even prescribed}, we obtain that
\be \label{ep1.Zetterberg dual strength even prescribed}
R(\mathcal{O}) \ge \min \{N^*(a): a \in \F_{q^2}^*\}\}.
\ee

As $\gcd(q-1,q+1)=1$, for a given $a \in \F_{q^2}^*$, there exist unqiely determined elements $a_1 \in \F_q^*$ and $h_1 \in H$ such that $a=a_1h_1$.

For $a_1 \in \F_q^*$ and $h_1,h\in H$, we have
\be
\Tr_{q^2/2}( a_1h_1h) = \Tr_{q/2}\left(a_1\left(h_1h + \frac{1}{h_1h}\right)\right).
\nn\ee

For $a_1 \in \F_{q}^*$, let $N_1^*(a_1)$ denote the cardinality defined as
\be 
N_1^*(a_1)=\left| \left\{x \in H: \Tr_{q/2}\left(a_1 \left(x + \frac{1}{x}\right) \right)=0\right\}\right|.
\nn\ee

Using these arguments, we obtain that (\ref{ep1.Zetterberg dual strength even prescribed}) is equivalent to
\be \label{ep2.Zetterberg dual strength even prescribed}
R(\mathcal{O}) \ge \min \{N_1^*(a_1): a_1 \in \F_{q}^*\}\}.
\ee

For $a_1 \in \F_q^*$, let $S_1(a_1)$ and $S_2(a_1)$ be the multiset subsets in $\F_q$ defined as
\be
S_1(a_1)=\left\{* a_1 \left(x + \frac{1}{x}\right): x \in H *\right\},
\nn\ee
and
\be
S_2(a_1)=\left\{* a_1 \left(y + \frac{1}{y}\right): y \in \F_q^* *\right\}.
\nn\ee

Using Lemma \ref{lemma.construction3} we obtain a partition of $\F_q^*$ such that
\be \label{ep3.Zetterberg dual strength even prescribed}
\left\{u_1,u_2, \cdots, u_{q/2}\right\} \bigcup
\left\{v_1,v_2, \cdots ,v_{q/2-1}\right\}=\F_q^*
\ee
and
\be 
N_1^*(a_1)= 1 + 2 \left| \left\{1 \le i \le q/2: \Tr_{q/2}(u_i)=0\right\}\right|.
\nn\ee

Using the partition in (\ref{ep3.Zetterberg dual strength even prescribed}),
let $N_2^*(a_1)$ be the integer defined as
\be 
N_2^*(a_1)= 1 + 2 \left| \left\{1 \le i \le q/2-1: \Tr_{q/2}(v_i)=0\right\}\right|.
\nn\ee
These definitions and Lemma \ref{lemma.construction3} imply that we have
\be \label{ep4.Zetterberg dual strength even prescribed}
N_1^*(a_1)+N_2^*(a_1)=q
\ee
for each $a_1 \in \F_q^*$. Combining (\ref{ep2.Zetterberg dual strength even prescribed}) and  (\ref{ep4.Zetterberg dual strength even prescribed})
we obtain
\be \label{ep5.Zetterberg dual strength even prescribed}
R(\mathcal{O}) \ge q-\max \{N_2^*(a_1): a_1 \in \F_{q}^*\}\}.
\ee

For $a_1 \in \F_q^*$, let $\chi^{(2)}(a_1)$ be the Artin-Schreier curve over $\F_{2^{2m}}$ given by
\be
y^2+y=a_1x + \frac{a_1}{x}.
\nn\ee
The genus of $\chi^{(2)}(a)$ is $1$ (\cite[Theorem 3.7.8]{Stichtenoth2009AFFC}), and hence $\chi^{(2)}(a_1)$ is an elliptic curve as in the proof of Theorem \ref{Melas dual strength even prescribed}.

For $a_1 \in \F_q^*$, let $|\chi^{(2)}(a_1)|$ 
denote the number of 
$\F_{2^{2m}}$-rational points of $\chi^{(2)}(a_1)$. Using the methods of Theorem \ref{Melas dual strength even prescribed},  we have
\be 
|\chi^{(2)}(a_1)|=2+2N_2^*(a_1).
\nn\ee
Using Hasse-Weil inequality \cite[Theorem 5.2.3]{Stichtenoth2009AFFC} we obtain
\be \label{ep6.Zetterberg dual strength even prescribed}
2 + 2N_2^*(a_1) \le 2^{2m}+1+2^{m+1}.
\ee
We complete the proof combining (\ref{ep5.Zetterberg dual strength even prescribed}) and   (\ref{ep6.Zetterberg dual strength even prescribed}).
\end{proof}

Theorem \ref{impr_fl} improve the Fazekas-Levenshtein bounds by 1 (in terms of distances) for $\ell=-1+2/n=t_1$ and so does Theorem \ref{general-theorem}. Improvements for reasonable larger $\ell$ do not reflect the integer part. Thus, in the first case $m=2$ (strength 4) we have $R(\mathcal{O}) \geq 5$ from \eqref{bound-3} and the Fazekas-Levenshtein bound says $R(\mathcal{O}) \leq 6$, while Theorem \ref{general-theorem} with $\ell=-1+2/17=t_1$ gives $R(\mathcal{O}) \leq 5$ (the exact value of the covering radius is found). For $m=3$ (strength 4 again) we have $R(\mathcal{O}) \geq 25$ from \eqref{bound-3} while the Fazekas-Levenshtein bound gives $R(\mathcal{O}) \leq 28$ and Theorem \ref{general-theorem} with $\ell=-1+2/65=t_1$ gives $R(\mathcal{O}) \leq 27$.

\bigskip

{\bf Funding.} The research of the first author was supported by project 
IC-TR/10/2024-2025. The research of the second author is supported by T\"{U}B\.{I}TAK under Grant 223N065.
The research of the third author is supported by the Bulgarian NSF grant KP-06-N72/6-2023.


\begin{thebibliography}{99}

%\bibitem{BBD} S. Boumova,  P. Boyvalenkov, D. Danev, Necessary conditions
%for existence of some designs in polynomial metric spaces, {\it Europ. J. Combin.} 20, 1999, 213-225.
\bibitem{AHLL99} 
A. Ashikhmin, I. Honkala, T. Laihonen, S. Litsyn, 
On relations between covering radius and dual distance, 
{\it IEEE Trans. Inform. Theory}, 45, 1999, 1808--1816.

\bibitem{Ban21} Ei. Bannai, Et. Bannai, T. Ito, R. Tanaka, {\it Algebraic Combinatorics}, Berlin, Boston, De Gruyter, 2021. 

\bibitem{BDHSS-DCC2019} P. Boyvalenkov, P. Dragnev, D. Hardin, E. Saff, M. Stoyanova, On spherical codes with inner products in prescribed interval, {\it Designs Codes Cryptography} 87,
299–315 (2019). 

\bibitem{BDHSS-ieee} P. Boyvalenkov, P. Dragnev, D. Hardin, E. Saff, M. Stoyanova, Universal bounds for size and energy of codes of 
given minimum and maximum distances, {\it IEEE Transactions on Information Theory}, "From Deletion-Correction to Graph Reconstruction: In Memory of Vladimir I. Levenshtein", 67(6), 2021, 3569-3584 (arXiv:1910.07274).

\bibitem{BS21} P. Boyvalenkov, M. Stoyanova, Linear programming bounds for covering radius of spherical designs, \emph{Results in Mathematics}, 76, art. 95 (19 pages), 2021. 

\bibitem{BMS} P. Boyvalenkov, T. Marinova, M. Stoyanova, Nonexistence of few binary orthogonal arrays,
{\it Discrete Applied Mathematics}, 217(P2), 2017, 144--150.

\bibitem{CW79} J. Carter, M. Wegman, Universal classes of hash functions. \emph{Journal of Computer and System Sciences} 18, 1979, 143--154.

\bibitem{DM93} E. Dawson, E. Mahmoodian, Orthogonal arrays and ordered threshold schemes, \emph{Australasian Journal of Combinatorics}, 8, 1993, 27--44.

\bibitem{Del-dis}
{P.\,Delsarte}, {\it An Algebraic Approach to the Association Schemes in Coding Theory},
Philips Res. Rep. Suppl. 10, 1973.

\bibitem{Del73}
{P.\,Delsarte}, Four fundamental parameters of a code and their combinatorial significance,
{\it Inform. Contr.} 23 (1973) 407-438.

%%%%%%%Ferruh 3

\bibitem{Dodunekov1985}
S.~M.~Dodunekov,
\newblock ``The optimal double-error correcting codes of Zetterberg and Dumer--Zinoviev are quasiperfect,''
\newblock \emph{C.R. de l'Acad\'emie bulgare des Sciences}, tome~38, nr.~9,
pp.~1121--1123, 1985.

\bibitem{DodunekovNilsson1992ZetterbergDecoding}
S.~M.~Dodunekov and J.~E.~M.~Nilsson,
\newblock ``Algebraic Decoding of the Zetterberg Codes,''
\newblock \emph{IEEE Transactions on Information Theory},
vol.~38, no.~5, pp.~1570--1573, Sept.~1992.
%%%%%%end of Ferruh 3

\bibitem{FL} G. Fazekas, V. I. Levenshtein, On upper bounds for
code distance and covering radius of designs in polynomial metric
spaces, {\it J. Comb. Theory A}, 70, 267-288 (1995).

\bibitem{GS26} A. L. Gavrilyuk, S. Suda, Extremal orthogonal arrays, arXiv:2512.23459. 


\bibitem{GSV} A. L. Gavrilyuk, S. Suda, J. Vidali, On tight 4-designs in Hamming association schemes, {\it Combinatorica}, 40 (2020) 345--362.

\bibitem{GS08} K. Gopalakrishnan, D. Stinson, Applications of orthogonal arrays to computer science. Ramanujan Mathematical Society, Lecture Notes Series in Mathematics, 7, 2008, 149--164.
%%%%ferruh

\bibitem{GuneriOzbudak2008WeilSerre}
C.~G\"uneri, F.~\"Ozbudak,
``Weil--Serre Type Bounds for Cyclic Codes,''
\emph{IEEE Transactions on Information Theory},
vol.~54, no.~12, pp.~5381--5395, Dec.~2008.
%%%%ferruh

\bibitem{HSS} A. Hedayat, N. J. A. Sloane, J. Stufken, Orthogonal Arrays: Theory and Applications. Springer-Verlag, New York, 1999.

\bibitem{Lev92} V. I. Levenshtein, Designs as maximum codes in polynomial
metric spaces, {\it Acta Appl. Math.} 25, 1992, 1--82.

\bibitem{Lev95}
V. I. Levenshtein, Krawtchouk polynomials and universal bounds for codes and designs in Hamming spaces, {\it IEEE Trans. Inform. Theory} 41, 1995, 1303--1321.

\bibitem{Lev98} V. I. Levenshtein, Universal bounds for codes and designs,
Chapter 6 (499-648) in {\it Handbook of Coding Theory}, Eds. V.Pless and
W.C.Huffman, Elsevier Science B.V., 1998.

\bibitem{LSS98} S. Litsyn, P. Sol\'e, R.  Struik, On the covering radius of an unrestricted code as a function of the rate and dual distance, {\it Discr. Appl. Math.}, 82, 1998, 177--191.

\bibitem{LT96} S. Litsyn,A. Tiet{\"a}v{\"a}inen, Upper bounds on the covering radius of a code with a given dual distance, 
{\it Europ. J. Combin.}, 17, 1996, 265--270.

%%% addition  Ferruh
\bibitem{MacWilliamsSloane1977}
F.~J.~MacWilliams and N.~J.~A.~Sloane,
\emph{The Theory of Error-Correcting Codes},
North-Holland Mathematical Library, Vol.~16,
North-Holland Publishing Company, Amsterdam--New York--Oxford, 1977.

\bibitem{Moreno1983}
O.~Moreno,
\newblock ``Further results on quasiperfect codes related to Goppa codes,''
\newblock \emph{Congressus Numerantium}, vol.~40, pp.~249--256, 1983.

%%% end addition  Ferruh

\bibitem{MK94} R. Mukerjee, S. Kageyama, On existence of two symbol complete orthogonal arrays, {\it J. Combin. Theory Ser. A}, 66, 1994, 176--181.

\bibitem{N79} R. Noda, On orthogonal arrays of strength 4 achieving Rao's bound, {\it J. London Math. Soc.} (2), 19, 1979, 385--390.

%
%
%\bibitem{NN} S. Nikova, V. Nikov, Improvement of the Delsarte bound for t-designs when it is not the best bound possible,
%{\it Des., Codes Crypt.} 28, 2003, 201-222.

\bibitem{Rao} C. R. Rao, Factorial experiments derivable from combinatorial arrangements of arrays,
\emph{J. Royal Stat. Soc.} 89 (1947) 128--139.

%%%%%%%%%%Ferruh2

\bibitem{SchoofVandervlugt1991}
R.~Schoof and M.~van~der~Vlugt,
\newblock ``Hecke Operators and the Weight Distributions of Certain Codes,''
\newblock \emph{Journal of Combinatorial Theory, Series A}, vol.~57,
pp.~163--186, 1991.

%%%%%%%%%%Ferruh2

\bibitem{sol95} P. Sol\'e, Packing radius, covering radius, and dual distance, {\it IEEE Trans. Inform. Theory}, 41, 1995, 268--272.

\bibitem{SS93} P. Sol\'e, P. Stokes, Covering radius, codimension, and dual-distance width, {\it IEEE Trans. Inform. Theory}, 39, 1993, 1195--1203.

%%%%%%%%%%Ferruh

\bibitem{Stichtenoth2009AFFC}
H.~Stichtenoth,
\emph{Algebraic Function Fields and Codes},
Graduate Texts in Mathematics, Vol.~254,
Springer-Verlag, Berlin--Heidelberg, 2nd ed., 2009.
%%%%%%%%%%%%%%Ferruh


\bibitem{Sti94} D. Stinson, Combinatorial techniques for universal hashing, \emph{Journal of Computer and System Sciences}, 48, 1994, 337--346.

\bibitem{Tie90} A. Tiet{\"a}v{\"a}inen, An upper bound on the covering radius as a function of the dual distance, {\em IEEE Trans. Inform. Theory}, 36 (1990) 1472--1474.
	
\bibitem{Tie91} A. Tiet{\"a}v{\"a}inen, Covering radius and dual distance, {\em Des. Codes Cryptogr.} {\bf 1} (1991) 31--46.

%%%%%%ferruh

\bibitem{Wolfmann1989NewBoundsCyclicCodes}
J.~Wolfmann,
``New bounds on cyclic codes from algebraic curves,''
in \emph{Lecture Notes in Computer Science}.
New York: Springer-Verlag, 1989, vol.~388, pp.~47--62.

%%%%%%%%%ferruh

\bibitem{Sze} G. Szeg\"{o}, {\it Orthogonal polynomials},
{AMS Col. Publ.}, vol. 23, Providence, RI, 1939.

%%%%%%%%%%ferruh3

\bibitem{Zetterberg1962}
L.~H.~Zetterberg,
\newblock ``Cyclic codes from irreducible polynomials for correction of multiple errors,''
\newblock \emph{IRE Trans. Inform. Theory}, vol.~IT-8, pp.~13--20, 1962.

%%%%%%%%%%%%ferruh3

\end{thebibliography}
\end{document}